\newcommand{\mathleft}{\@fleqntrue\@mathmargin0pt}
\newcommand{\mathcenter}{\@fleqnfalse}
\newtheorem{problem}{\textbf{Problem}}
\newtheorem{thm}{Theorem}
\newtheorem{challenge}{Challenge}
\def\BibTeX{{\rm B\kern-.05em{\sc i\kern-.025em b}\kern-.08em
    T\kern-.1667em\lower.7ex\hbox{E}\kern-.125emX}}
\begin{document}
\title{Deep Generation of Heterogeneous Networks}

\author{\IEEEauthorblockN{
			Chen Ling,
			Carl Yang, and
			Liang Zhao\IEEEauthorrefmark{1}}
		\IEEEauthorblockA{Department of Computer Science, Emory University, USA\\}
		\IEEEauthorblockA{\{chen.ling, j.carlyang, liang.zhao\}@emory.edu}
		\IEEEauthorrefmark{1}Corresponding Author
	}
\maketitle

\begin{abstract}
Heterogeneous graphs are ubiquitous data structures that can inherently capture multi-type and multi-modal interactions between objects. In recent years, research on encoding heterogeneous graph into latent representations have enjoyed a rapid increase. However, its reverse process, namely how to construct heterogeneous graphs from underlying representations and distributions have not been well explored due to several challenges in 1) modeling the local heterogeneous semantic distribution; 2) preserving the graph-structured distributions over the local semantics; and 3) characterizing the global heterogeneous graph distributions. To address these challenges, we propose a novel framework for heterogeneous graph generation (HGEN) that jointly captures the semantic, structural, and global distributions of heterogeneous graphs. Specifically, we propose a heterogeneous walk generator that hierarchically generates meta-paths and their path instances. In addition, a novel heterogeneous graph assembler is developed that can sample and combine the generated meta-path instances (e.g., walks) into heterogeneous graphs in a stratified manner. Theoretical analysis on the preservation of heterogeneous graph patterns by the proposed generation process has been performed. Extensive experiments\footnote{https://github.com/lingchen0331/HGEN} on multiple real-world and synthetic heterogeneous graph datasets demonstrate the effectiveness of the proposed HGEN in generating realistic heterogeneous graphs.
\end{abstract}

\begin{IEEEkeywords}
Heterogeneous Graph, Graph Generation, Deep Generative Models
\end{IEEEkeywords}

\section{Introduction}
	As a ubiquitous data structure, the graph can model connections (i.e., edges) between individual objects (i.e., nodes). Tremendous efforts have been made to study various types of graph problems, resulting in a rich literature of related papers and methods \cite{sun2019graph, you2018graphrnn, yun2019graph, GNNBook2021, zhao2021event}. The study of graphs can be mainly categorized into two categories: 1) graph representation learning, which aims at encoding graph topological and semantic information into vector space \cite{wu2020comprehensive}; and 2) graph generation, which reversely aims at constructing graph-structured data from low-dimensional space containing the graph generation rules or distribution \cite{guo2020systematic}. In the past years, previous studies of graphs have been made mostly on homogeneous graphs, which are the graphs consist of nodes under the same type. However, as a generalization of the homogeneous graph, heterogeneous graphs are the graphs with multiple types of nodes which further result in multiple types of edges, such as citation networks~\cite{zhou2007co} and social networks~\cite{dong2012link}. Fig. \ref{fig: 1}(b) shows a citation network with author, paper, venue, and term as nodes and ``authorship'', ``containment'' and ``publishment'' as edges. The local semantics based on certain combinations of node types and edge types reflect the key patterns of heterogeneous graphs~\cite{sun2011pathsim, sun2013meta}. Such local semantics are typically represented as \emph{meta-path}, a sequence of node types and edge types. Meta-paths characterize the rich and diverse relations among nodes~\cite{sun2013meta, sun2013pathselclus}. For example, as shown in Fig. \ref{fig: 1}(b), two authors can be connected via a meta-path since they both contribute to a paper, while two authors can alternatively be connected because their papers are accepted at the same venue.

	\begin{figure*}[tbp]
 		\centerline{\includegraphics[width=0.9\textwidth]{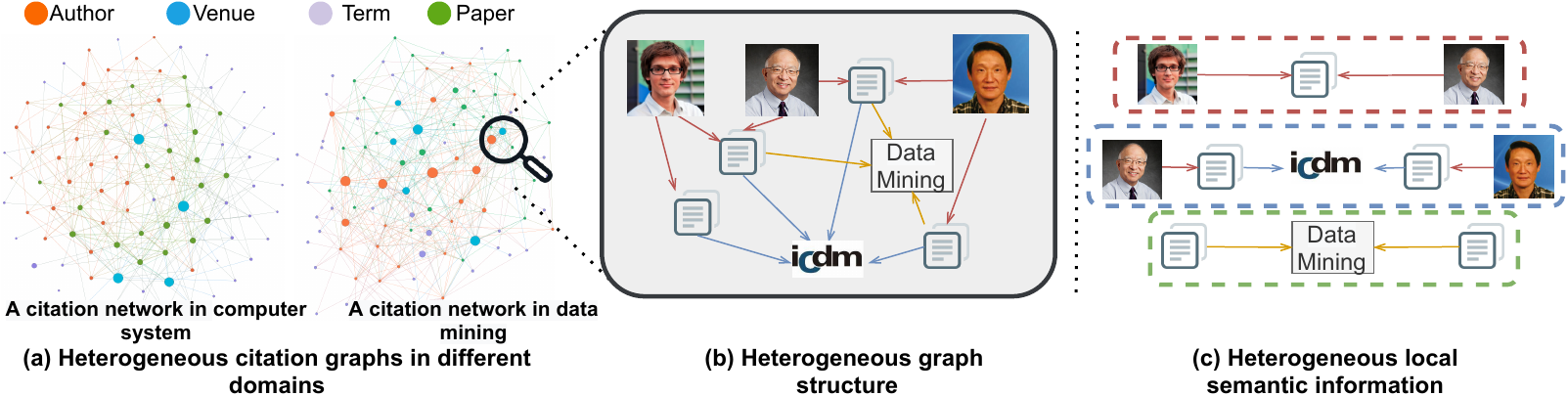}}
 		\caption{Examples of heterogeneous graph in academic field.}
 		\label{fig: 1}
 		\vspace{-5mm}
 	\end{figure*}
	
	
	Due to the recent advancement of various graph neural network models, plenty of works \cite{dong2017metapath2vec, fu2017hin2vec, wang2019heterogeneous, hu2020heterogeneous, yang2018similarity, yang2018meta, yang2019neural, yang2020heterogeneous, fan2020metagraph} have been proposed on studying heterogeneous graph representation learning and embedding in the past few years. These works have achieved significant progress in many downstream tasks (e.g., meta-relation detection \cite{dong2017metapath2vec, fu2017hin2vec}, heterogeneous node embedding learning \cite{wang2019heterogeneous, hu2020heterogeneous}, and heterogeneous link prediction \cite{zhang2019iteratively, yang2018similarity}). Among all the heterogeneous graph-related research, there remains a paucity of study on the \textit{heterogeneous graph generation}. It is self-evident that the advantages of generating realistic heterogeneous graphs are at least two-faceted: 1) generating high-quality heterogeneous graphs requires us to comprehensively capture the latent graph distribution, which can greatly enrich our understanding of the implicit properties of heterogeneous graphs; 2) generating heterogeneous graphs is useful in specific downstream applications (e.g., recommendation system \cite{shi2015semantic}, knowledge graph reasoning \cite{zhang2019iteratively}, and node proximity search \cite{sun2011pathsim}). Despite the importance of the heterogeneous graph generation, in the past decade, only one study \cite{gupta2012generating} tries to generate random heterogeneous graphs, which is based on hand-crafted rules and fails to generate realistic heterogeneous graph as it cannot learn the real data distribution underlying the observed graphs. On the other hand, a surge of research efforts on deep generative models~\cite{guo2020systematic, guo2019deep, bojchevski2018netgan, you2018graphrnn, simonovsky2018graphvae, yang2021secure} have been recently observed in the task of homogeneous graph generation. Through learning latent and complex dependencies directly from observed graphs, these deep graph generative models leverage different ways to learn and capture the underlying graph-structure distributions directly from the observed data without the need for hand-crafted rules. These approaches have been shown superiority in maintaining the structural properties in homogeneous graphs.

	However, existing deep generative models designed for homogeneous graphs cannot be trivially adapted to heterogeneous graphs due to the following significant challenges: 1) \textit{Difficulties in preserving heterogeneous semantic information.} Current works for homogeneous graphs have been either using random walks as a tool to learn the graph topological distribution as learning the distribution of random walks (\cite{bojchevski2018netgan, cannetgan}) or directly modeling an overall distribution of the edges (\cite{kipf2016variational, simonovsky2018graphvae}) over the homogeneous graphs. However, objects in heterogeneous graphs are inter-connected via various meta-paths as shown in Fig. \ref{fig: 1}(c). As the complex local semantic information is carried by meta-paths, adapting current works to the heterogeneous graph scenario without any elaborations on meta-path would bring difficulties in learning and preserving the distribution of such complex semantic patterns spanning different graph entities (i.e., edges and nodes) in the newly generated heterogeneous graphs.
	2) \textit{Difficulties in preserving heterogeneous higher-order structural information.} In some cases, meta-paths may also fall short of expressing more intricate relationships among nodes in heterogeneous graphs. As marked in Fig. \ref{fig: 1}(b), some common and symmetric higher-order structures spanning meta-paths will likely be observed repeatedly, which forms a triangle or orbit structure (e.g., one author writes two papers that are accepted by the same venue, and two papers of an author focus on the same research topic). The distributions of these higher-order graph structures are also hard to capture in heterogeneous graphs, which brings more challenges to effective heterogeneous graph generation. 3) \textit{Difficulties in preserving heterogeneous global information.} Meta-paths are also well-recognized to play a fundamental role in preserving the global patterns of heterogeneous graphs \cite{sun2011pathsim}. For example, the ratio of different node types, and edge types, and their meta-paths are apparently different between the citation networks of computer system domain and data mining domain, as shown in Fig. \ref{fig: 1}(a). It is important to preserve the global distribution of meta-path patterns during heterogeneous graph generation, which is again extremely difficult as it is entangled with the preservation of node type ratios, edge type ratios, and graph topological patterns.

    In coping with these challenges, we introduce an end-to-end graph generative framework, namely \underline{H}eterogeneous Graph \underline{Gen}eration (HGEN), whose goal is to generate novel heterogeneous graphs by preserving all the complex local semantic,  higher-order structural, and global properties through directly modeling the distribution of meta-paths in observed heterogeneous graphs. Particularly, to deal with the first challenge of capturing the complicated local semantics, we propose to learn a joint distribution of the random walks and the associated meta-paths from the observed heterogeneous graphs. On top of that, for challenge two, we encode heterogeneous higher-order structural information into nodes via embedding learning and use it to guide the generation of meta-paths and random walks that form different high-order heterogeneous structures. To tackle the third challenge, we develop a novel heterogeneous graph assembly method, which is theoretically proved to preserve the global heterogeneous graph patterns in node types, edge types, and meta-paths.
	We conclude our major contributions as follows:
	\begin{itemize}[leftmargin=*]
	    \item \textbf{Problem.} We not only formulate a new paradigm of heterogeneous graph generation but also identify and resolve its unique challenges in preserving various heterogeneous graph properties.
	    \item \textbf{Framework.} We propose an end-to-end generative framework for heterogeneous graph generation. The proposed framework can effectively learn the underlying distribution of heterogeneous graphs. It generates heterogeneous graphs with ensuring the preservation of various heterogeneous graph properties.
	    \item \textbf{Evaluation.} We conduct extensive experiments on both synthetic and real-world heterogeneous graphs. Compared with state-of-the-art baselines, HGEN achieves competitive results in preserving most of the static graph properties. In addition, HGEN is shown to be capable of generating realistic heterogeneous graphs by preserving important meta-path information.
	\end{itemize}
    
    
    \section{Related Work} \label{sec: related_works}
    \textbf{Heterogeneous Graph Mining.} Compared to the commonly-adopted homogeneous graph, heterogeneous graph carries much richer semantic information and has therefore gained much attention in recent literature \cite{shi2016survey}. The concept of meta-paths in heterogeneous graph \cite{sun2013pathselclus, sun2011pathsim} is one of the most important concepts proposed to capture numerous semantic relationships across multiple types of objects systematically. Since the introduction of heterogeneous graph, many innovative data mining tasks have spawned, including similarity search \cite{sun2011pathsim}, object clustering \cite{sun2013pathselclus}, and heterogeneous node classification \cite{wang2019heterogeneous}. 
    
    \textbf{Heterogeneous Graph Representation Learning.} In recent years, graph neural network (GNN) has achieved massive success in extensive applications \cite{kipf2016semi, yun2019graph} due to its capability of effectively learning relationships and interactions on non-Euclidean data. There exist plenty of attempts trying to adopt GNNs to learn with heterogeneous graphs, and almost all of them rely on employing meta-paths to model heterogeneous structures \cite{yang2020heterogeneous}. Specifically, proximity-preserving methods \cite{dong2017metapath2vec, fu2017hin2vec, yang2018similarity, yang2019neural} aim to capture heterogeneous network topological information via meta-path-constrained random walks. On the other line of approach, \cite{yang2018meta, wang2019heterogeneous, hu2020heterogeneous} try to aggregate information from heterogeneous neighbors via multiple layers of learnable projection functions. Throughout the study of heterogeneous graphs \cite{sun2011pathsim, yang2020heterogeneous}, \textit{meta-path} serves as the fundamental building block owing to its nonpareil ability to carry both graph topological and rich semantic information.
    
    \textbf{Graph Generation.} Generative models for graphs have a rich history due to the wide range of applications in different domains, such as link prediction \cite{bojchevski2018netgan, simonovsky2018graphvae}, protein structure analysis \cite{de2018molgan}, and information diffusion analysis in social networks \cite{wang2018graphgan}. Traditional graph generation methods (e.g., random graphs, stochastic block models, and Bayesian network models) fail to model complex dependencies in our real-world scenarios. In addition, they cannot effectively preserve the statistical properties of the observed graphs. In the last few years, there has been a surge in research focusing on deep graph generation. According to \cite{guo2020systematic}, the current deep graph generation can be divided into two categories: sequential-based and one-shot-based. For sequential-based graph generation methods \cite{you2018graphrnn, bojchevski2018netgan, sun2019graph}, they autoregressively generate the nodes and edges with the LSTM model. However, the sequential-based generation is limited in following a fixed node/edge permutation order, which greatly loses the generation flexibility and model scalability. On the other hand, one-shot-based generation methods \cite{simonovsky2018graphvae, de2018molgan, bojchevski2018netgan, yang2019conditional, yang2021secure, zhang2021tg,ling2022stgen} try to build a probabilistic graph model based on the matrix representation that can generate graph topology as well as node/edge attributes in a one-shot, but most of them cannot easily be applied in large graphs due to the large time complexity. Finally, multi-attributed graph generation \cite{you2018graphrnn, guo2019deep, goyal2020graphgen} aims at generating homogeneous graphs by preserving node/edge attributes. Instead, the key patterns of heterogeneous graphs are the higher-order local semantics reflected by the combinatorial of the types of nodes and edges, which cannot be captured by methods for homogeneous graphs.
    
    \section{Problem Formulation}\label{sec: prob}
 	A heterogeneous graph \cite{shi2016survey, yang2020heterogeneous} is a graph $\mathcal{G} = \{\mathcal{V}, \mathcal{E}\}$ with multiple types of objects and relations. $\mathcal{V}$ is the set of objects (i.e., nodes), where each node $v_i \in \mathcal{V}$ is associated with a node type $o = \phi(v_i)$. $\mathcal{E} \subseteq\mathcal{V}\times \mathcal{V}$ is the set of edges, where each edge $e_{ij} \in \mathcal{E}$ is associated with a relation type $l = \psi(e_{ij})$.
 	
    \begin{figure*}[!t]
    \centering
    \includegraphics[width=1.0\textwidth]{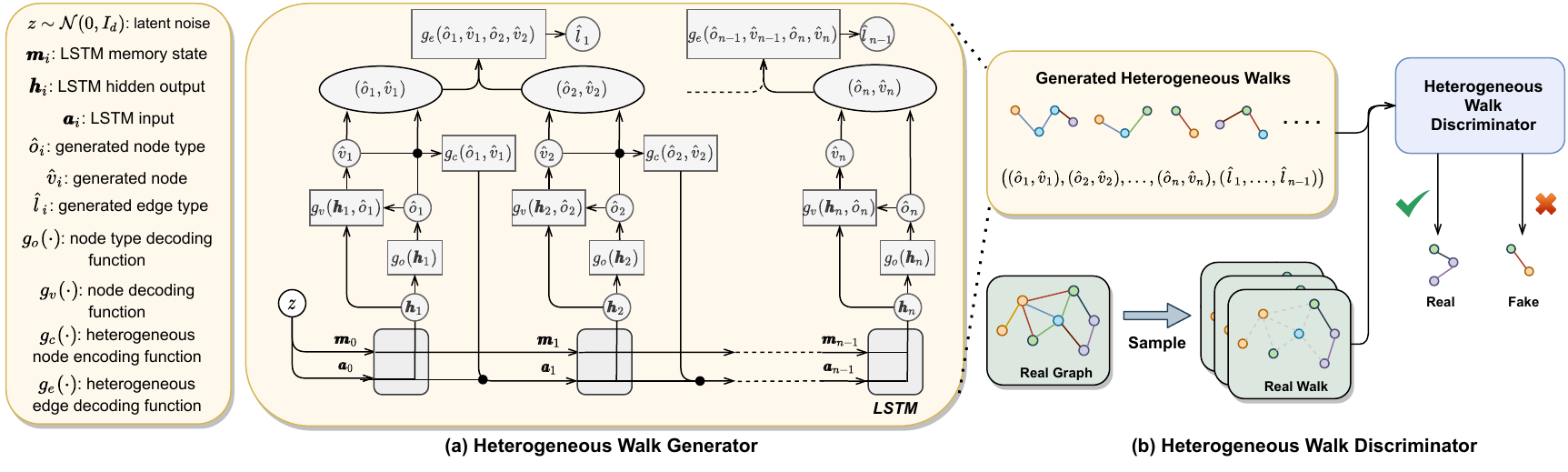}\vspace{-3mm}
    \caption{The illustration of the heterogeneous walks generation in HGEN.}
    \label{fig: model_framework}
    \vspace{-6mm}
    \end{figure*}
    
    In the study of heterogeneous graphs, the concepts of meta-paths are widely considered as cornerstones and adopted to systematically capture numerous semantic relationships across multiple types of objects, which are defined as a path over the graph \cite{sun2013pathselclus,yang2020heterogeneous}. Hence meta-paths are indispensable to be considered as basic units for heterogeneous graph generation. Concretely, a meta-path $\mathbf{o}$ is defined as a sequence of object types and edge types $\mathbf{o} =  \big((o_1, o_2, ..., o_n), (l_1, l_2, ..., l_{n-1})\big) = o_1 \xrightarrow{l_1} o_2  \xrightarrow{l_2} ...\xrightarrow{l_{n-1}} o_n$, where each $o_i$ and $l_j$ are node type and edge type in the sequence, respectively. Each meta-path captures the rich semantic information between its two ends $o_1$ and $o_n$. In heterogeneous graphs, the local semantic information is carried on each of walks $\mathbf{v} = (v_0, v_1, ..., v_n)$ and its associated meta-path $\mathbf{o}$. We again take Fig. \ref{fig: 1}(c) as an example, there exist two meta-paths between papers: (\textit{Paper, Author, Paper}) and (\textit{Paper, Venue, Paper}). The utilization of different meta-paths allow the heterogeneous graph to contain rich topological and semantics among diverse objects, which has been shown beneficial to many real-world graph mining applications \cite{yang2020heterogeneous, wang2019heterogeneous, hu2020heterogeneous}.
    
    
    With the preliminary notion of the heterogeneous graph, we formalize the heterogeneous graph generation problem as follows:
    \begin{problem}[\textbf{Heterogeneous Graph Generation}] 
    The goal of the heterogeneous graph generation is to learn a distribution $p_{\text{data}}(\mathcal{G})$ from the observed heterogeneous graphs such that a new graph $\hat{\mathcal{G}}$ can be obtained by sampling $\hat{\mathcal{G}} \sim p_{\text{data}}(\mathcal{G})$.  
    \end{problem}
    \begin{challenge}[\textbf{Difficulties in modeling the complex local semantic information.}]
    Although the existence of meta-paths allows heterogeneous graph to characterize the combinatorial of node types and edge types, it is unclear how to model their distributions and generatively assemble them into heterogeneous graphs.
    \end{challenge}
    
    \begin{challenge}[\textbf{Difficulties in characterizing the heterogeneous structural patterns.}]
    The local structural patterns in heterogeneous graphs are often expressed in higher-order proximity among the nodes and edges (e.g., triangles, orbits, and other higher-order structures). Such the local structure may fuse multiple walks under one or more meta-paths with richer semantic information, yet brings more difficulties in learning its distribution.
    \end{challenge}
    \begin{challenge}[\textbf{Difficulties in capturing heterogeneous global meta-path information.}]
    Meta-paths indeed play a significant role in preserving the global patterns of heterogeneous graphs. In heterogeneous graph generation, it is important yet challenging to preserve the global distribution of meta-path patterns since the distribution of meta-path patterns often involves node type ratios, edge type ratios, and graph topological patterns.
    \end{challenge}
    
    \section{Heterogeneous Graph Generation}\label{sec: model}
    To address the above challenges, we propose a new heterogeneous graph generation framework, named HGEN. To address the first and second challenge, we propose a \textit{heterogeneous walk generator} in Sec. \ref{sec: generator} to jointly learn the distribution of local walks and the associated meta-paths so that both heterogeneous topological and local semantic information can be well captured. To overcome the second challenge, we leverage the heterogeneous node embedding to make the generator be aware of any potential higher-order structures that each node may be involved with. Finally, for the third challenge, we propose a novel \textit{heterogeneous graph assembler} in Sec. \ref{sec: framework}, which can construct new heterogeneous graphs by capturing the global heterogeneous property, namely different meta-path ratios. We further prove that the global heterogeneous property can be well-preserved through our Theorem \ref{thm: 1} introduced in Sec. \ref{sec: proof}.
    
    \subsection{Heterogeneous Walk Generator} \label{sec: generator}
    In the observed graph $\mathcal{G}$, a heterogeneous walk is defined as a tuple that consists of two components: a walk $\mathbf{v}$ and an associated meta-path $\mathbf{o}$. The proposed heterogeneous walk generator $G$ is defined as a probabilistic sequential learning model to generate synthetic heterogeneous walks: $(\hat{\mathbf{v}}, \hat{\mathbf{o}}) = \big((\hat{v}_1, \hat{v}_2, ..., \hat{v}_n), ((\hat{o}_1, \hat{o}_2, ..., \hat{o}_n), (\hat{l}_1, \hat{l}_2, ..., \hat{l}_{n-1}))\big)$, where the $\hat{\mathbf{v}}$ and $\hat{\mathbf{o}}$ are denoted as the generated walk and associated meta-path, respectively. We use $\hat{v}_i$, $\hat{o}_i$, and $\hat{l}_i$ to denote each of the generated node, node type, and edge type in $(\hat{\mathbf{v}}, \hat{\mathbf{o}})$, respectively. Fig. \ref{fig: model_framework}(a) illustratively summarizes the whole generative process of each synthetic heterogeneous walk. 
    
    \textbf{Heterogeneous Walk Generation.} We model $G$ as a sequential learning process based on a recurrent architecture, and each unit $f_{\theta}$ in the sequential model is parameterized by $\theta$ so that it can generate a node type $\hat{o}$ and a corresponding node $\hat{v}$ that belongs to this node type in a hierarchical manner. Precisely, the node type $\hat{o}$ is determined based on the previously generated sequence, and the node $\hat{v}$ is then coherently determined by the generated node type as well as the generated sequence. Both generated node type $\hat{o}$ and node $\hat{v}$ together provide information for the generation of the next node type and node instance.
    
    Specifically, at each recurrent block (i.e., time step) $t$, $f_{\theta}$ produces two outputs $(\pmb{m}_t, \pmb{h}_t)$, where the $\pmb{m}_t$ is the current memory state and the $\pmb{h}_t$ is a latent probabilistic distribution (i.e., hidden output of $f_{\theta}$) denoting the information carried from previous time steps. We first sample the node type $\hat{o}_t \sim g_o(\pmb{h}_t)$ based on the probability distribution $\pmb{h}_t$, where the $g_o(\cdot)$ is a node type decoding function. We then sample the node $\hat{v}_t$ by a node decoding function $\hat{v}_t \sim g_v(\pmb{h}_t, \hat{o}_t)$ that takes $\pmb{h}_t$ and $\hat{o}_t$ as inputs. Lastly, the generated node type $\hat{o}_t$ and node $\pmb{h}_t$ are fused by a heterogeneous node encoding function $g_c(\hat{o}_t, \hat{v}_t)$, which then serves as the input of next recurrent block. 
    
    \textbf{Heterogeneous Node Sampling.} 
    To overcome the second challenge, we cannot uniformly sample $\hat{v}_t$ based on the node type $\hat{o}_t$ because such a way may cause the neglection of (1) \textit{node structural} distribution and (2) \textit{node semantic} distribution. For example, we may observe an author always tends to cite a paper with high citation (namely, high node degree of this paper node). Then such distribution needs to be modeled with structural information. On the other hand, we may observe a data mining paper is unlikely to cite a computer system paper, and we may also need to characterize this tendency in the distribution. Both of the above distributions cannot be tackled by uniformly sampling. Therefore, to tackle this challenge, since latent node embedding could encode both topological and semantic information into the node, we propose to calculate a latent embedding $\Tilde{v}_t$ of the next node $v_t$, then we select with a higher probability the closer embedding among all the embeddings that belong to node type $\hat{o}_t$ so that the next node $v_t$ can be determined by the sampled embedding. 

    More specifically, we first calculate the latent node embedding $\Tilde{v}_t$ based on the sampled node type $\hat{o}_t$ by a simple linear transformation. We then calculated the distance between $\Tilde{v}_t$ and other node embedding $\Tilde{v}^{(\hat{o}_t)}_i$, meaning any node $\Tilde{v}_i$ belonging to the sampled node type $\hat{o}_t$. In this case, given a total number of $k$ embeddings  that belong to the type $\hat{o}_t$, the next node $\hat{v}_t$ can be sampled from a multinomial distribution:
    \begin{equation*}
    \hat{v}_t \sim \text{Multi}(\Tilde{v}^{(\hat{o}_t)}_1, \Tilde{v}^{(\hat{o}_t)}_2, ..., \Tilde{v}^{(\hat{o}_t)}_k; p_1, p_2, ..., p_k),
    \end{equation*} where each $p_i = -\rVert{d(\Tilde{v}_t, \Tilde{v}^{(\hat{o}_t)}_{i})}\rVert^2$  and $d(\cdot, \cdot)$ is a distance metric such as Euclidean distance. Note that the node embedding $\Tilde{v}^{(\hat{o}_t)}_i$ can be obtained from a conventional heterogeneous node embedding technique such as \cite{fu2017hin2vec}. 
    
    In order to generate a variable-length heterogeneous walk, we incorporate a end-of-sequence token as an additional node type so that the heterogeneous walk generator stops when the sampled node type is the token at any steps. Therefore, the proposed generator is able to produce variable-length heterogeneous walks. Finally, the edge type $l_t$  can be predicted by a simple edge decoding function $g_e(\hat{o}_{t}, \hat{v}_{t}, \hat{o}_{t-1}, \hat{v}_{t-1})$ that takes its two end nodes $\hat{v}_{t-1}$ and $\hat{v}_{t}$ as well as their node types $\hat{o}_{t-1}$ and $\hat{o}_{t}$ as inputs.
    In all, we summarize the overall generative process as follows:
    
    \resizebox{.99\linewidth}{!}{
  \begin{minipage}{\linewidth}
  \begin{align*}
        &\pmb{a}_0 = 0,\; \pmb{m}_0 = f_0(\pmb{z}), \; \pmb{z}\sim \mathcal{N}(0, 1)\\
        &\pmb{a}_1 = g_c(\hat{o}_1, \hat{v}_1), \: \hat{v}_1 \sim g_v(\pmb{h}_1, \hat{o}_1),\: \hat{o}_1 \sim g_o(\pmb{h}_1), (\pmb{m}_1, \pmb{h}_1) = f_{\theta}(\pmb{m}_0, \pmb{a}_0)\\
        &\pmb{a}_2 = g_c(\hat{o}_2, \hat{v}_2), \: \hat{v}_2 \sim g_v(\pmb{h}_2, \hat{o}_2),\: \hat{o}_2 \sim g_o(\pmb{h}_2), (\pmb{m}_2, \pmb{h}_2) = f_{\theta}(\pmb{m}_1, \pmb{a}_1)\\
        &\hat{l}_1 = g_e(\hat{o}_2, \hat{v}_2, \hat{o}_1, \hat{v}_1)\\
        &\cdots\\
        &\hat{v}_n \sim g_v(\pmb{h}_n, \hat{o}_n),\: \hat{o}_n \sim g_o(\pmb{h}_n),\: (\pmb{m}_n, \pmb{h}_n) = f_{\theta}(\pmb{m}_{n-1}, \pmb{a}_{n-1})\\
        &\hat{l}_{n-1} = g_e(\hat{o}_{n}, \hat{v}_n, \hat{o}_{n-1}, \hat{v}_{n-1})\\
    \end{align*}
  \end{minipage}
}

    
    In this work, we utilize LSTM as the recurrent architecture, and $f_{\theta}$ becomes a single LSTM unit. To initialize the whole generative process, $G$ takes a random noise $\pmb{z}$ as input, which is drawn from a standard Gaussian distribution. Additionally, for the node type decoding function $g_o(\cdot)$, we apply the Gumbel-softmax trick \cite{jang2016categorical} in $g_o(\cdot)$ to make the whole sampling differentiable. Finally, in most of the real-world scenarios, the edge type $l_t$ can be determined by the types of its two end nodes $\hat{o}_t$ and $\hat{o}_{t-1}$ if there does not exist multi-typed relations between two node types. In this case, the heterogeneous walk generator can be simplified only to generate node sequences and associated node types. 
    
    \begin{figure*}[!t]
    \centering
    \includegraphics[width=0.85\textwidth]{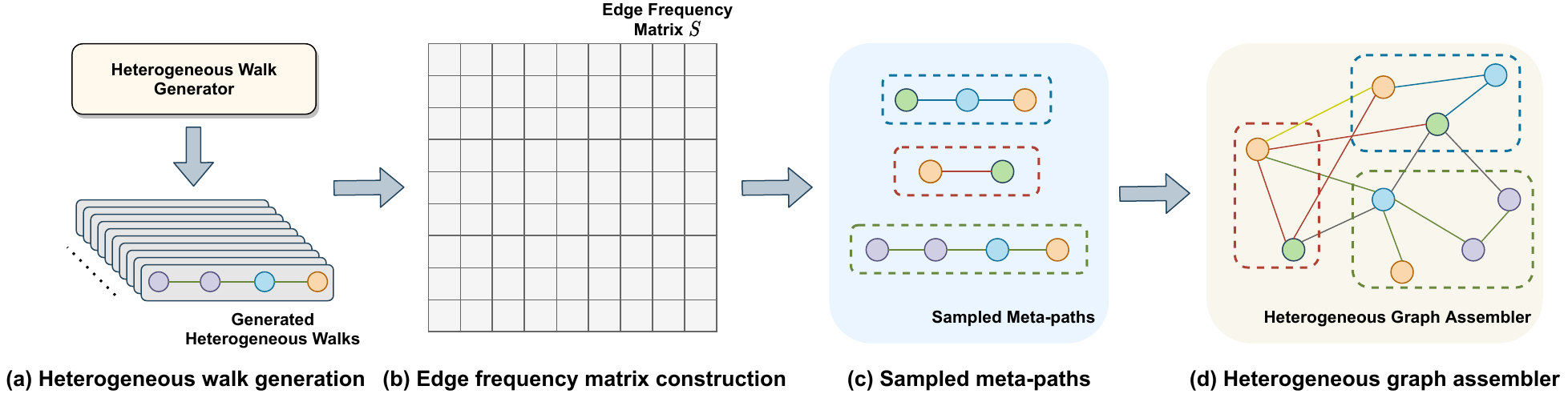}
    \vspace{-2mm}
    \caption{The process of heterogeneous graph assembler.}
    \label{fig: assembler}
    \vspace{-6mm}
    \end{figure*}

    \subsection{Heterogeneous Generator Training and Utilization} \label{sec: framework}
    In the following, we will introduce how to train the above-mentioned generator and how to use the heterogeneous walks generated by it to construct heterogeneous graphs. Concretely, we utilize a heterogeneous discriminator $D$ to distinguish between real and fake heterogeneous walks, where the real heterogeneous walks are uniformly sampled from the observed graph. We then propose a heterogeneous graph assembler to construct new graphs based on the sampled heterogeneous walks. More details are presented as follows.
    
    We first introduce the overall objective function of the Wasserstein heterogeneous GAN \cite{arjovsky2017wasserstein}, which is written as:
    \begin{equation} \label{eq: obj}
        \begin{split}
        \mathcal{L}_{\text{HGEN}} = &\text{ max}\:\mathbb{E}_{ (\mathbf{o}, \mathbf{v})\sim p(\mathcal{G})}[D_o(\mathbf{o})+ D_v(\mathbf{v})] \\
        &-\mathbb{E}_{ z\sim p(z)}[D_o(\hat{\mathbf{o}})+ D_v(\hat{\mathbf{v}})],
        s.t. \,\, G(z) = (\hat{\mathbf{o}}, \hat{\mathbf{v}}),
    \end{split}
    \end{equation}
    where $\mathbf{v}$ and $\mathbf{o}$ are the random walk and associated meta-path, respectively, directly sampled from the observed heterogeneous graph $\mathcal{G}$. They are the real data for training our heterogeneous walk generator $G$. Specifically, given an observed heterogeneous graph $\mathcal{G} = \{\mathcal{V}, \mathcal{E}\}$, we utilize random-walk-based method to uniformly sample a set of random walks $\{\textbf{v}_1, \textbf{v}_2, ...\}$, where each $\textbf{v}_i$ is a node sequence s.t. $\textbf{v}_i = (v_1, v_2, ..., v_n)$. In addition, we extract the meta-path information $\textbf{o}_i = \big((o_1, o_2, ..., o_n), (l_1, l_2, l_{n-1})\big)$ from each $\textbf{v}_i$. 
    
    The heterogeneous discriminator $D$ in Eq. \eqref{eq: obj} is designed as a parallel recurrent architecture in order to individually distinguish whether each component in the heterogeneous walks are valid or not. Specifically, at each recurrent block (i.e., each step) $t$, the discriminator $D$ takes two inputs: the generated node type $\hat{o}_t$ and node index $\hat{v}_{t}$, each of which is fed into an individual recurrent unit. After processing both sequences, the discriminator returns a single score $D_v(\mathbf{v})+D_o(\mathbf{o})$ that represents the probability of the heterogeneous walk being real.
    
    \textbf{Heterogeneous Graph Assembler.} To assemble a heterogeneous graph from the generated heterogeneous walks, we further propose a novel stratified heterogeneous edge sampling strategy to achieve the following steps: 1) it first samples a node $\hat{v}_i$ and its type $\hat{o}_i$ from all of the generated heterogeneous walks; 2) based on the node type $\hat{o}_i$, we then sample a meta-path that starts with $\hat{o}_i$; 3) we iteratively sample the next node $\hat{v}_{i+1}$ in the sampled meta-path if both of the node type $\hat{o}_{i+1}$ and edge type $\hat{l}_{i}$ fits the meta-path pattern. 
    
    More specifically, the generator $G$ firstly produces a sufficient number of heterogeneous walks as shown in Fig. \ref{fig: assembler}(a). We then construct an symmetric adjacency matrix $S$ with size $|\mathcal{V}|\times |\mathcal{V}|$ to record the count of edges observed from the sampled heterogeneous walks in each entry $S_{ij}$, where the $|\mathcal{V}|$ is the size of the node set. Next, we collect all of the meta-path patterns generated by the generated heterogeneous walks, as shown in Fig. \ref{fig: assembler}(b-c). For the first step of the stratified heterogeneous edge sampling, we sample the a node $\hat{v}_i$ and its type type $\hat{o}_i$ based on the node degree distribution $\frac{\sum_jS_{ij}}{|\mathcal{V}|}$. For the second step, among all the meta-paths $\{\mathbf{o}^{(f)}_1, \mathbf{o}^{(f)}_2, ...\}$ that start with the node type $\hat{o}_i$, we sample a meta-path $\mathbf{o}^{(f)}_i$ based on the probability $\frac{c(\mathbf{o}^{(f)}_i)}{T^{\hat{o}_i}}$, where $T^{\hat{o}_i}$ is the total count of generated meta-paths that starts with node type $\hat{o}_i$ and $c(\mathbf{o}^{(f)}_i)$ is the count of meta-path pattern $\mathbf{o}^{(f)}_i$. For the third step, by following this meta-path pattern $\mathbf{o}_r = (o_1, o_2, ..., o_n)$, we iteratively sample all the nodes whose node types are regulated by the the meta-path. Precisely, we sample the next node $v_j$ by sampling all the neighbors of the current node $v_i$ with the probability $p_{v_iv_j} = (S_{ij})/(\sum_{s} S_{is})$ such that all the nodes $v_s$ belong to the specific node type $o_j$ following the meta-path $\mathbf{o}_i^{(f)}$. The sampled node sequence $\mathbf{v}_r = (v_0, v_1, ...)$ is then added to the current under construction. We continue the stratified heterogeneous edge sampling strategy until the desired amount of edges is reached. The final assembled graph is visualized in Fig. \ref{fig: assembler} (d).
    
    \textbf{Complexity Analysis.} The computational complexity of HGEN is $O(W\cdot L)$, where $W$ is the weights of a single LSTM unit, and $L$ is the length of the generated heterogeneous walks. However, the length of our proposed heterogeneous walk is considerably small ($1\le L \le 3$) while the walk length in other random-walk-based graph generative method \cite{bojchevski2018netgan} is $\ge 16$. For auto-regressive graph generation models \cite{yun2019graph, you2018graphrnn}, the time complexities are at least $O(|\mathcal{V}|^2\cdot W)$, where $|\mathcal{V}|$ is the cardinality of the node set. They convert graph as a long sequence by performing a large number of breadth-first-search enumerations for each graph. Additionally, HGEN also has linear complexity in graph assembly, it only needs to run the trained model $T_s$ times to sample heterogeneous walks for constructing the score matrix $S$. To sum up, the overall complexity of HGEN can be reduced to $O(W + T_s)$, which makes our proposed model highly efficient for handling large graphs, since the overall process is not sensitive to $|\mathcal{V}|$ at all.
    
    \begin{figure}[!t]
 		\centerline{\includegraphics[width=0.4\textwidth]{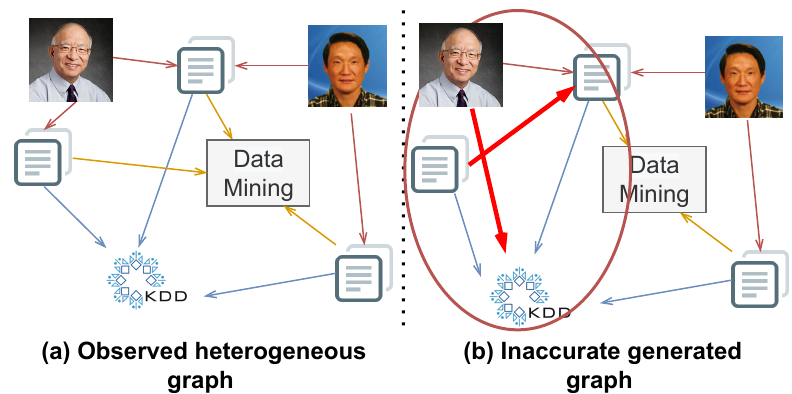}}
 		\caption{Example of two heterogeneous graphs with different semantic information: the observed meta-path patterns are different, although the node and edge distribution are the same between two graphs. Specifically, since we do not observe a direct link between (author, venue) and (paper, paper) in the observed graph Fig. \ref{fig: proof}(a). It is not accurate for the generated graph Fig. \ref{fig: proof}(b) that generate such links.}
 		\label{fig: proof}
 		\vspace{-6mm}
 	\end{figure}

    \subsection{Meta-path Information Preservation Analysis} \label{sec: proof}
    As we discussed in Sec. \ref{sec: prob}, it is significant to preserve the meta-path information in our generated graph. Taking Fig. \ref{fig: proof} as an example, although both graphs have exactly the same structure, they are still regarded as two different heterogeneous graphs since their meta-path distributions are different. Given the importance of the meta-path information in heterogeneous graph generation, we further show that our framework can successfully preserve this meta-path information as proved in Theorem \ref{thm: 1}.
    
    \begin{thm} \label{thm: 1}
        The distribution of meta-path patterns $\overline{\mathcal{O}}^{(r)}$ of the generated heterogeneous graph equals the distribution of meta-path patterns $\overline{\mathcal{O}}$ in the observed heterogeneous graph, namely $p(\overline{\mathcal{O}}^{(r)}) = p(\overline{\mathcal{O}})$.
    \end{thm}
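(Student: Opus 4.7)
The plan is to decompose the probability of the assembler emitting a particular meta-path pattern into the three stratified-sampling steps described in Section \ref{sec: framework}, and then invoke the distributional matching guarantee of the Wasserstein heterogeneous GAN in Eq.~\eqref{eq: obj} to rewrite each factor in terms of the observed graph. Concretely, for any target pattern $\mathbf{o}^{(f)} = (o_1, \ldots, o_n)$, a single iteration of the assembler inserts a meta-path instance realizing $\mathbf{o}^{(f)}$ into $\hat{\mathcal{G}}$ with probability
\[
p(\overline{\mathcal{O}}^{(r)} = \mathbf{o}^{(f)}) \;=\; p(\hat{o}_1 = o_1)\,\cdot\,p(\mathbf{o}^{(f)} \mid \hat{o}_1 = o_1),
\]
where the first factor is the marginal over starting node types induced by the degree-proportional sampling on the score matrix $S$, and the second factor is exactly $c(\mathbf{o}^{(f)})/T^{o_1}$ by construction of step (2) of the assembler, since the subsequent node-by-node walk in step (3) is conditioned on, and must conform to, the already chosen meta-path pattern.

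Next I would argue that both factors match their counterparts in the observed graph $\mathcal{G}$. At optimality of Eq.~\eqref{eq: obj} the generator's joint distribution over $(\hat{\mathbf{v}}, \hat{\mathbf{o}})$ coincides with the distribution of walks and meta-paths uniformly sampled from $\mathcal{G}$; marginalizing over walks gives $c(\mathbf{o}^{(f)})/T^{o_1} \to p_{\text{data}}(\mathbf{o}^{(f)} \mid o_1)$ as the number of generated walks used to build the assembler's bookkeeping grows. Likewise, because every edge count in $S$ comes from generator output, the row sums $\sum_j S_{ij}$ converge to the edge-incidence frequencies of the observed walks; aggregating these frequencies over nodes of type $o_1$ yields $p(\hat{o}_1 = o_1) = p_{\text{data}}(o_1)$, the starting-type marginal in $\mathcal{G}$. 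Multiplying the two matched factors gives $p(\overline{\mathcal{O}}^{(r)} = \mathbf{o}^{(f)}) = p_{\text{data}}(o_1)\,p_{\text{data}}(\mathbf{o}^{(f)} \mid o_1) = p(\overline{\mathcal{O}} = \mathbf{o}^{(f)})$, and arbitrariness of $\mathbf{o}^{(f)}$ delivers $p(\overline{\mathcal{O}}^{(r)}) = p(\overline{\mathcal{O}})$.

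The main obstacle will be the starting-type step: justifying that degree-proportional sampling on the empirical $S$ aligns with the starting-type marginal of the uniformly sampled observed walks. Interior walk positions contribute to two entries of $S$ while the two endpoints contribute to only one each, so the raw row sums are not literal walk-occurrence counts and a naive identification would introduce a position-dependent bias. I would resolve this either by restricting $S$ and the $T^{o_1}$ counts to directed starting edges of each walk (so that both sides of the identity are counted with the same convention and the bias cancels), or by observing that the discrepancy between endpoint and interior contributions is type-independent once walks are sampled uniformly, so that the discrepancy becomes a multiplicative constant that drops out upon normalization by $|\mathcal{V}|$ in the denominator of $\sum_j S_{ij}/|\mathcal{V}|$. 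Everything else in the argument is a straightforward application of the law of total probability to the layered sampler, conditional on the GAN having converged.
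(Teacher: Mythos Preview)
Your argument is sound and rests on the same key ingredient as the paper (optimality of the Wasserstein GAN giving $p_g = p_{\text{data}}$), but the decomposition differs. The paper structures the proof as a three-link chain of equalities: observed $\to$ sampled (i.i.d.\ draws preserve the multinomial over patterns, argued via sufficient statistics), sampled $\to$ generated (GAN optimality), and generated $\to$ assembled (treated as the literal reverse of the first step, i.e., i.i.d.\ draws from $\overline{\mathcal{O}}^{(g)}$). You instead collapse the first two links into a single appeal to GAN convergence and then unpack the assembler explicitly as $p(\hat{o}_1=o_1)\cdot c(\mathbf{o}^{(f)})/T^{o_1}$, observing correctly that step (3) only instantiates nodes and does not alter the already-fixed pattern.

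What your route buys is a more faithful account of the stratified sampler in Section~\ref{sec: framework}: you isolate the one non-obvious step---that degree-proportional sampling on $S$ recovers the starting-type marginal of the observed walks---and you are candid that this needs either a restriction of $S$ to first edges or a symmetry argument to remove the endpoint/interior bias. The paper's chain is cleaner to state and makes the sampling/assembling symmetry explicit, but it simply asserts that assembling is ``exactly the reverse procedure'' of i.i.d.\ sampling without reconciling that claim with the degree-weighted start rule; in other words, the paper glosses over precisely the obstacle you flagged. Neither argument is fully rigorous on that point, but yours is the more honest one.
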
 
    \begin{proof}
        We will prove that the ratio of the meta-path patterns can be preserved in three steps: 1) the ratio of different meta-path patterns can be preserved during the sampling procedure; 2) the ratio of generated meta-path patterns can be preserved during the generation procedure; 3) the meta-path patterns can be preserved during the graph assembling procedure.
        
        \textit{Meta-path Ratio Preservation in Sampling.} Let $\overline{\mathcal{O}} = (\overline{\mathbf{o}}_1, \overline{\mathbf{o}}_2, ...)$ be the collection of meta-paths obtained from the observed heterogeneous graph $\mathcal{G}$, each $\overline{\mathbf{o}}_i$ is a meta-path in one-hot format $\overline{\mathbf{o}}_i \in \{0, 1\}^{1\times R}$, where the $R$ is the total number of different meta-path patterns. $\overline{\mathcal{O}}^{(\tau)} = (\overline{\mathbf{o}}^{(\tau)}_1, \overline{\mathbf{o}}^{(\tau)}_2, ..., \overline{\mathbf{o}}^{(\tau)}_K)$ is the sequence of sampled meta-paths with sampling size $K$, where each meta-path $\overline{\mathbf{o}}^{(\tau)}_j \in \{0, 1\}^{1\times R}$ is drawn independent and identically distributed ($i.i.d$) from $\overline{\mathcal{O}}$. 
        
        Suppose that $\boldsymbol{\mu} = [\mu_1, \mu_2, ..., \mu_R]^T$ denotes the probability of each individual meta-path pattern in $\overline{\mathcal{O}}$, it is obvious that $\mathbb{E}[\overline{\mathbf{o}}_i|\boldsymbol{\mu}] = \sum_{\overline{\mathbf{o}}_i}p(\overline{\mathbf{o}}_i|\boldsymbol{\mu})\overline{\mathbf{o}}_i = [\mu_1, \mu_2, ..., \mu_R]^T = \boldsymbol{\mu}$. Now consider the total $K$ observations $\overline{\mathcal{O}}^{(\tau)} = (\overline{\mathbf{o}}^{(\tau)}_1, \overline{\mathbf{o}}^{(\tau)}_2, ..., \overline{\mathbf{o}}^{(\tau)}_K)$, the corresponding likelihood function takes the form:
        \begin{equation}\label{eq: thm_1}
            p(\overline{\mathcal{O}}^{(\tau)}|\boldsymbol{\mu}) = \prod^R_i \prod^K_j \mu_j^{\overline{\mathbf{o}}^{(\tau)}_{ij}} = \prod^K_j \mu_j^{\sum_n\overline{\mathbf{o}}^{(\tau)}_{nj}} = \prod^K_j \mu_j^{m_j}
        \end{equation}
        We see that the likelihood function depends on the $K$ data points only through the $R$ quantities: $m_j = \sum_n\overline{\mathbf{o}}^{(\tau)}_{nj}$. Since the number of observations of $\overline{\mathbf{o}}^{(\tau)}_j$ equals $1$, we achieved sufficient statistics for this distribution. Therefore, $p(\overline{\mathcal{O}}^{(\tau)}) = p(\overline{\mathcal{O}})$ can be proved.
        
        \textit{Meta-path Ratio Preservation in Generation.}
        Since we have proved the meta-path ratio can be preserved during the sampling, the next step is to show that the distribution of generated meta-paths $p(\overline{\mathcal{O}}^{(g)})$ is equal to $p(\overline{\mathcal{O}}^{(\tau)})$. Proving $p(\overline{\mathcal{O}}^{(g)}) = p(\overline{\mathcal{O}}^{(\tau)})$ is equivalent to prove whether $p_{data} = p_g$ in the GAN setting.  As being proved in the works of GANs and their variants \cite{goodfellow2014generative, arjovsky2017wasserstein}, it showed that the objective function of the generator $G$ is equivalent to optimize the distribution distance between $p_{data}$ and $p_g$ if the discriminator $D$ is optimal. Therefore, global optimality of $p_g = p_{data}$ can be achieved if both generator $G$ and discriminator $D$ have enough capability. Therefore, $p(\overline{\mathcal{O}}^{(g)}) = p(\overline{\mathcal{O}}^{(\tau)})$ if both $G$ and $D$ are optimal in our framework. 
        
        \textit{Meta-path Ratio Preservation in Assembling.} Finally, we show that our graph assembling method can also preserve the meta-path ratio from the generated data $\overline{\mathcal{O}}^{(g)}$ such that $p(\overline{\mathcal{O}}^{(g)}) = p(\overline{\mathcal{O}}^{(r)})$. As discussed in Sec. \ref{sec: framework}, the new graph $\hat{\mathcal{G}}$ is directly assembled by meta-paths $(\overline{\mathbf{o}}^{(g)}_1, \overline{\mathbf{o}}^{(g)}_2, ..., \overline{\mathbf{o}}^{(g)}_Q)$ that are sampled $i.i.d$ from $\overline{\mathcal{O}}^{(g)}$ with sampling size $Q$, which is exactly the reverse procedure of Eq. \eqref{eq: thm_1}. 
        
        Therefore, if both generator $G$ and discriminator $D$ are optimal, the multinomial distribution $p(\overline{\mathcal{O}})$ of distinct meta-path patterns can be preserved in all three steps of our generation framework.
    \end{proof}
    
    \section{Experiment}\label{sec: exp}
    In this section, we compare HGEN to the adaption of closest state-of-the-art baselines, demonstrating its effectiveness in generating realistic heterogeneous graphs in diverse settings.
	
	
	\subsection{Data}
	
	\textbf{Synthetic Datasets.} We synthesis random heterogeneous graphs of different sizes through the combination of $N$ overlapping homogeneous graphs, where the overlap is accomplished by node sharing. We generate three random heterogeneous graphs (named as $\text{Syn}_{100}$, $\text{Syn}_{200}$, and $\text{Syn}_{500}$) with node size $100$, $200$, and $500$, respectively. The number of node types in each of the synthetic heterogeneous graph is $3$. 
    
    \textbf{Real-world Datasets.} We also employ three large-scale real-world heterogeneous graph datasets in our experiment.
    \begin{itemize}[leftmargin=*]
        \item \textbf{PubMed}. This dataset consists of four classes of nodes: Gene (G), Disease (D), Chemical (C), and Species (S). We construct a sub-graph that relates to all Chemical nodes labeled in \cite{yang2020heterogeneous}. There are $1,565$ nodes and $13,532$ edges.
        \item \textbf{IMDB}. This movie-related heterogeneous graph is adopted from \cite{wang2019heterogeneous}, which contains three node types: Director (D), Actor (A), Movie (M), and Genre (G). We construct a subgraph that contains all the movies with a score $\ge 7.5$. This graph contains $1,653$ nodes and $4,267$ edges.
        \item \textbf{DBLP}. This heterogeneous graph adopted from \cite{wang2019heterogeneous} contains Paper (P), Author (A), Venue (V), and Term (T) as node types. We sample a subgraph that is related to five computer science venues: \textit{KDD}, \textit{WSDM}, \textit{WWW}, \textit{ICDM}, and \textit{ICML}. There are $1,565$ nodes and $47,885$ edges.
    \end{itemize}
    
    \begin{table*}[t]
\centering
\resizebox{0.93\textwidth}{!}{%
\begin{tabular}{@{}cccccccc|ccc@{}}
\toprule
Graphs                                        & Models          &  LCC & TC & Clustering Coef. & Powerlaw Coef. & Assortativity & Degree Distribution Dist. & EO Rate & Uniqueness \\ \midrule
\multicolumn{1}{c|}{\multirow{7}{*}{Syn-100}} & GraphRNN
&      $78.43\pm 2.23$               &      $16.62\pm 5.42$          &      $0.002\pm0.01$             &        $1.611\pm 0.09$       &     $\mathbf{-0.153\pm 0.07}$          &     2.19e$-2\pm3.21$e$-3$         &       $37.21\%\pm 1.08\%$         &    $33.09\%\pm 7.06\%$         \\ 
\multicolumn{1}{c|}{}                         & NetGAN          &         $80.12 \pm 3.45$            &   $6.79 \pm 1.76$             &         $0.001 \pm 0.00$          &     $1.524 \pm 0.21$          &      $-0.213 \pm 0.09$         &       1.33e$-2\pm6.46$e$-3$       &       $8.74\%\pm 0.82\%$         &    $\mathbf{94.03\%\pm 0.49\%}$         \\
\multicolumn{1}{c|}{}                         & GraphVAE        &    $99.01\pm0.00$                 &       $224.81\pm5.13$         &        $0.70\pm0.04$           &  $4.579\pm0.05$           &       $-0.73\pm0.05$        &      3.71e$-1\pm1.98$e$-2$            &       $11.5\% \pm 1.09\%$         &       $65.54\%\pm 2.98\%$          \\ 
\multicolumn{1}{c|}{}                         & VGAE            &     $48.9\pm4.63$                &       $63.7\pm46.25$         &       $0.184\pm0.06$            &     $\mathbf{1.87\pm0.10}$          &      $0.1\pm0.03$         &           2.23e$-1\pm6.08$e$-2$      &  $\mathbf{3.23\%\pm 0.09\%}$              &    $51.1\%\pm 3.04\%$      \\ 
\multicolumn{1}{c|}{}                         & \textbf{HGEN} &               $\mathbf{81.13\pm2.42}$   &       $\mathbf{53.12\pm3.78}$         &       $\mathbf{0.079\pm0.01}$            &      $1.782\pm0.01$         &      $-0.114\pm0.03$         &        \textbf{8.79e$\mathbf{-3\pm3.12}$e$\mathbf{-3}$}      &    $10.2\%\pm 0.17\%$            &    $92.97\%\pm 0.72\%$         \\ \cmidrule(l){2-10}
\multicolumn{1}{c|}{}                         & \textit{Real}            &          85   &       36         &        0.072           &        1.832       &      -0.169         &        N/A        &       N/A         &     N/A      \\ \midrule \midrule
\multicolumn{1}{c|}{\multirow{7}{*}{Syn-200}} & GraphRNN        &    $132.76\pm 1.08$                 &      $2.54\pm 0.77$          &    $0.001\pm 0.00$               &      $1.603\pm 0.01$         &      $-0.05\pm 0.01$         &    5.15e$-2\pm3.07$e$-3$       &      $25.81\%\pm 2.65\%$          &     $27.72\%\pm 3.07\%$           \\ 
\multicolumn{1}{c|}{}                         & NetGAN          &  $153\pm1.56$                   &      $2.24\pm0.35$          &    $0.001 \pm 0.00$               &      $1.579\pm 0.31$         &      $-0.008\pm 0.001$         &          6.43e$-2\pm4.2$e$-3$       &        $11.32\%\pm 0.77\%$        &   $95.88\%\pm 3.19\%$       \\ 
\multicolumn{1}{c|}{}                         & GraphVAE        &      $195.43\pm1.12$               &     $51.32\pm1.01$           &       $0.002\pm0.001$            &        $5.377\pm0.21$       &      $-0.75\pm 0.05$         &     5.38e$-1\pm1.7$e$-2$      &       $\mathbf{1.78\%\pm 0.41\%}$         &     $64.37\%\pm 2.94\%$           \\  
\multicolumn{1}{c|}{}                         & VGAE            &    $86.2\pm16.93$                 &       $860.4\pm185.9$         &       $0.23\pm0.04$            &     $\mathbf{1.787\pm0.08}$          &       $0.2\pm0.15$        &      8.53e$-2\pm2.14$e$-2$       &     $3.74\%\pm 0.08\%$           &    $59.65\%\pm 1.46\%$          \\ 
\multicolumn{1}{c|}{}                         & \textbf{HGEN} &         $\mathbf{158.5\pm2.64}$            &       $\mathbf{38.5\pm5.26}$          &       $\mathbf{0.043\pm0.01}$            &       $1.732\pm0.02$        &       $\mathbf{-0.065\pm0.04}$        &         \textbf{2.25e$\mathbf{-2\pm5.5}$e$\mathbf{-3}$}     &      $4.22\%\pm 0.67\%$          &     $\mathbf{96.31\%\pm 5.11\%}$       \\ \cmidrule(l){2-10} 
\multicolumn{1}{c|}{}                         & \textit{Real}            &        180             &       28         &         0.037          &        1.809       &       -0.089        &     N/A        &        N/A        &   N/A           \\ \midrule \midrule
\multicolumn{1}{c|}{\multirow{6}{*}{Syn-500}} & GraphRNN        &     $311.59\pm 2.14$                &       $11.53\pm 5.57$         &      $0.004\pm 0.001$             &     $1.862\pm 0.01$          &      $1.862\pm0.002$         &       4.05e$-2\pm1.1$e$-3$         &    $21.87\%\pm 0.86\%$            &    $29.54\%\pm 4.32\%$       \\  
\multicolumn{1}{c|}{}                         & NetGAN          &         $305.81\pm14.28$            &      $\mathbf{3\pm1.21}$          &      $\mathbf{0.001\pm0.001}$             &      $1.812\pm0.07$         &      $0.03\pm0.12$         &       4.83e$-2\pm7.4$e$-4$       &     $6.72\% \pm 0.13\%$           &     $93.98\%\pm 0.21\%$        \\ 
\multicolumn{1}{c|}{}                         & VGAE            &     $97.0\pm29.24$                &        $4346.2\pm453.62$        &     $0.193\pm0.02$              &       $1.77\pm0.06$        &      $-0.022\pm0.09$         &       2.22e$-1\pm2.4$e$-2$       &    $5.46\%\pm 1.12\%$            &     $63.65\%\pm3.1\%$        \\ 
\multicolumn{1}{c|}{}                         & \textbf{HGEN} &         $\mathbf{347.88\pm7.63}$            &      $74.88\pm4.78$          &      $0.031\pm0.01$             &      $\mathbf{1.865\pm0.02}$         &      $\mathbf{-0.097\pm0.01}$         &       \textbf{2.81e$\mathbf{-2\pm3.4}$e$\mathbf{-3}$}       &      $\mathbf{1.49\%\pm 0.11\%}$          &    $\mathbf{95.89\%\pm 1.18\%}$         \\ \cmidrule(l){2-10} 
\multicolumn{1}{c|}{}                         & \textit{Real}            &         417            &       8         &         6.5e$-3$          &       1.978        &      -0.12         &      N/A       &       N/A         &    N/A          \\ \midrule \midrule
\multicolumn{1}{c|}{\multirow{6}{*}{PubMed}} & GraphRNN        &           $1563.23\pm 32.46$          &       $1549.79\pm 33.62$         &          $0.01\pm 0.007$         &         $\mathbf{1.753\pm 0.04}$      &      $-0.03\pm 0.01$         &     1.61e$-1\pm3.71$e$-2$      &        $13.41\%\pm 1.24\%$        &      $54.62\%\pm 4.32\%$          \\ 
\multicolumn{1}{c|}{}                         & NetGAN          &            $793.2\pm 41.5$         &      $18.3\pm0.9$          &   $0.001\pm0.00$                &     $1.47\pm 0.11$          & $-0.12\pm0.02$              &      6.69e$-2\pm1.5$e$-3$          &        $4.32\% \pm 0.54\%$        &     $78.03\% \pm 0.19\%$      \\
\multicolumn{1}{c|}{}                         & VGAE            &    $347.9\pm7.03$                 &       $70,982.2\pm 4,086.53$         &       $0.234\pm0.01$            &       $2.48\pm0.01$        &     $-0.466\pm0.01$          &     1.38e$-1\pm4.8$e$-3$       &    $\mathbf{\approx 0\%}$            &    $22.87\% \pm 1.68\%$           \\ 
\multicolumn{1}{c|}{}                         & \textbf{HGEN} &            $\mathbf{825.6\pm 22.1}$         &      $\mathbf{1569.3\pm31.3}$          &   $\mathbf{0.034\pm0.003}$                &     $1.634\pm 0.07$          & $\mathbf{-0.143\pm0.08}$              &      \textbf{3.92e$\mathbf{-2\pm7.5}$e$\mathbf{-4}$}      &       $0.07\%\pm 0.01\%$         &     $\mathbf{93.91\% \pm 0.12\%}$          \\ \cmidrule(l){2-10} 
\multicolumn{1}{c|}{}                         & \textit{Real}            &            $948$         &        $2,114$        &        $0.068$           &         $1.75$      &       $-0.208$        &    N/A      &       N/A         &     N/A            \\ \midrule \midrule
\multicolumn{1}{c|}{\multirow{6}{*}{IMDB}} & GraphRNN        &              $1425.47\pm 121.5$       &      $142.13\pm5.87$          &     $0.179\pm0.02$              &      $2.97\pm0.05$         &       $0.05\pm0.04$        &      1.98e$-1\pm2.61$e$-3$         &     $9.87\%\pm 0.51\%$           &       $21.52\% \pm 3.31\%$      \\ 
\multicolumn{1}{c|}{}                         & NetGAN          &             $932.5\pm8.49$        &        $\mathbf{0.0\pm0.0}$        &   $\mathbf{0.0\pm0.0}$               &      $2.08\pm0.01$         &          $\mathbf{-0.25\pm0.07}$     &   1.36e$-1\pm1.89$e$-3$        &   $7.62\%\pm 0.07\%$             &     $82.69\% \pm 1.27\%$          \\ 
\multicolumn{1}{c|}{}                         & VGAE            &    $635.2\pm4.16$                 &     $7,752.4\pm281.32$           &       $0.141\pm0.01$            &      $2.02\pm0.02$         &     $-0.49\pm0.15$          &   1.9e$-1\pm2.33$e$-3$        &      $\mathbf{\approx 0\%}$          &      $42.71\% \pm 1.47\%$          \\ 
\multicolumn{1}{c|}{}                         & \textbf{HGEN} &             $\mathbf{945.2\pm11.54}$        &        $26.0\pm3.28$        &    3.56e$-3\pm3.42$e$-4$               &      $\mathbf{2.16\pm0.01}$         &          $-0.19\pm0.04$     &   \textbf{4.36e$\mathbf{-2\pm4.25}$e$\mathbf{-4}$}        &    $2.69\%\pm 0.04\%$            &       $\mathbf{88.71\% \pm 0.39\%}$        \\      \cmidrule(l){2-10}   \multicolumn{1}{c|}{}           & Real            &           $1,074$          &      1          &        4.43e$-4$           &     $2.51$          &    -0.235           &    N/A         &       N/A         &      N/A        \\ \midrule \midrule
\multicolumn{1}{c|}{\multirow{5}{*}{DBLP}}  & NetGAN          &               $10,353\pm72.71$      &     $\mathbf{0.0\pm0.0}$           &      $\mathbf{0.0\pm0.0}$             &         $3.308\pm0.41$      &          $-0.059\pm0.03$     &      5.03e$-1\pm2.1$e$-2$       &         $5.48\%\pm 0.32\%$       &    $\mathbf{72.51\% \pm 0.32\%}$          \\ 
\multicolumn{1}{c|}{}                         & VGAE            &   $3,771\pm 236.29$                  &        $1214.69\pm 452.61$        &        $0.271\pm0.06$           &      $1.579\pm 0.07$         &   $-0.44\pm0.11$            &      8.71e$-2\pm1.77$e$-3$       &     $\mathbf{\approx 0\%}$           &      $17.26\% \pm 0.41\%$        \\  
\multicolumn{1}{c|}{}                         & \textbf{HGEN} &               $\mathbf{5,163\pm21.41}$      &     $1068\pm12.83$           &      $0.018\pm 0.001$             &         $\mathbf{1.793\pm0.21}$      &          $\mathbf{-0.157\pm0.03}$     &      \textbf{5.82e$\mathbf{-3\pm1.67}$e$\mathbf{-4}$}      &     $1.55\% \pm 0.09\%$           &     $66.59\% \pm 0.17\%$          \\ \cmidrule(l){2-10}
\multicolumn{1}{c|}{}                         & \textit{Real}            &   $5,513$                  &    $0.0$            &          $0.0$         &  $1.855$             &      $-0.201$         &    N/A       &     N/A           &    N/A            \\ \midrule \midrule
\end{tabular}
}
\caption{Performance evaluation over compared baselines. The \textit{Real} rows include the values of real graphs, while the rest are the evaluation results of different algorithms. The best performance (the closest to real value) achieved under each metric for a particular dataset is highlighted in bold font. Note that we do not include GraphVAE in datasets with ($\ge 500$) nodes and GraphRNN in datasets with ($\ge 10,000$) nodes because the programs return errors.}
\vspace{-8mm}
\label{tab: statistical_graph_metrics}
\end{table*}

\subsection{Experiment Setting}
	In our experiment, we focus on meta-paths with length $1$, $2$, and $3$ as they are the most common ones in heterogeneous graphs \cite{sun2011pathsim}. We sample $10$ graphs from each of the trained models and report their average results and standard deviation in Table \ref{tab: statistical_graph_metrics}. We randomly select $60\%$ of the edges for training, and the remaining graph is used for testing.
	
	\textbf{Baselines.} Since no baseline is available for the novel task of heterogeneous graph generation, we carefully adapt four state-of-the-art graph generation methods: NetGAN \cite{bojchevski2018netgan}, GraphVAE \cite{simonovsky2018graphvae}, VGAE \cite{kipf2016variational}, and GraphRNN \cite{you2018graphrnn}. We utilize node type information as node features of the input graph in GraphVAE and VGAE. In addition, we modify NetGAN and GraphRNN to make them available to generate node types. 
	
	\textbf{Evaluation Metrics.} The evaluation of heterogeneous graph generation can be divided into three aspects. 1) \textit{Graph Statistical Properties}: we focus on six typical statistics as widely used in \cite{bojchevski2018netgan, yang2019conditional, goyal2020graphgen} for measuring the structural similarity, including LCC (the size of the largest connected component), TC (Triangle count), Clustering Coef. (clustering coefficient); Powerlaw Coef. (power-law distribution of the node degree distribution), Assortativity, and Degree Distribution Dist. (Node degree distribution Maximum Mean Discrepancy distance).  2) \textit{Graph Novelty and Uniqueness.}  Ideally, we would want the generated graphs to be diverse and similar, but not identical. To quantify this aspect, we check the uniqueness between the generated graphs by calculating their edit distances. Additionally, we calculate the EO Rate (edge overlapping rate) between the generated graphs and the testing graphs for measuring the novelty of the generated graphs. 3) \textit{Meta-path Ratio Properties}: We measure the preservation of meta-path distribution in two metrics. Firstly, we measure the meta-path length ratio preservation. Secondly, under different meta-path lengths, we also measure the distribution of the frequent meta-path patterns.
    
    \begin{figure*}[!t]
		\subfloat[Meta-path Length Ratio]{\label{fig: path_ratio_syn_500}
			\includegraphics[width=0.24\textwidth]{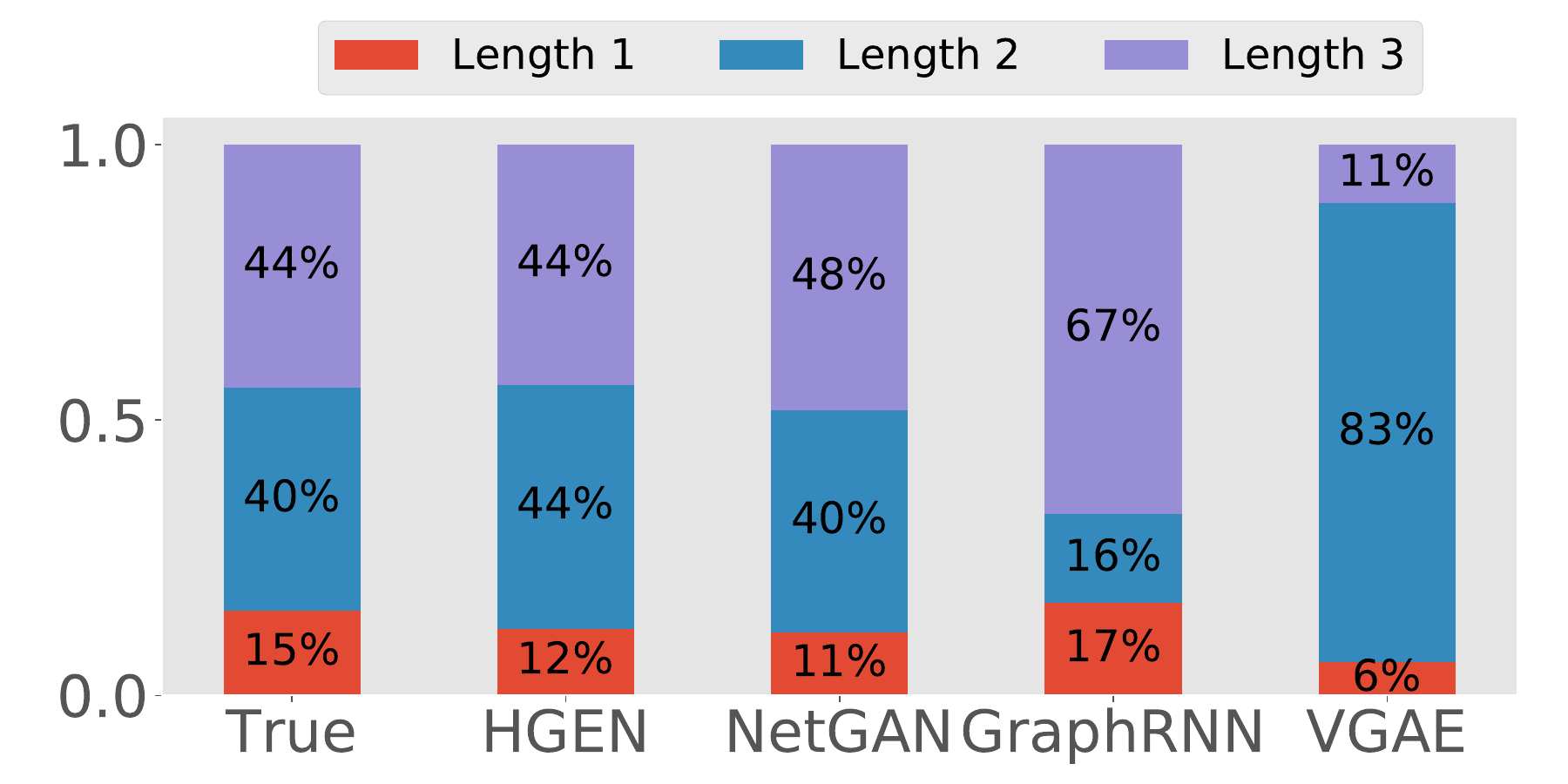}}
		\subfloat[Meta-path Patterns - Length 1]{\label{fig: frequent_path_ratio_syn_500_len_2}
			\includegraphics[width=0.24\textwidth]{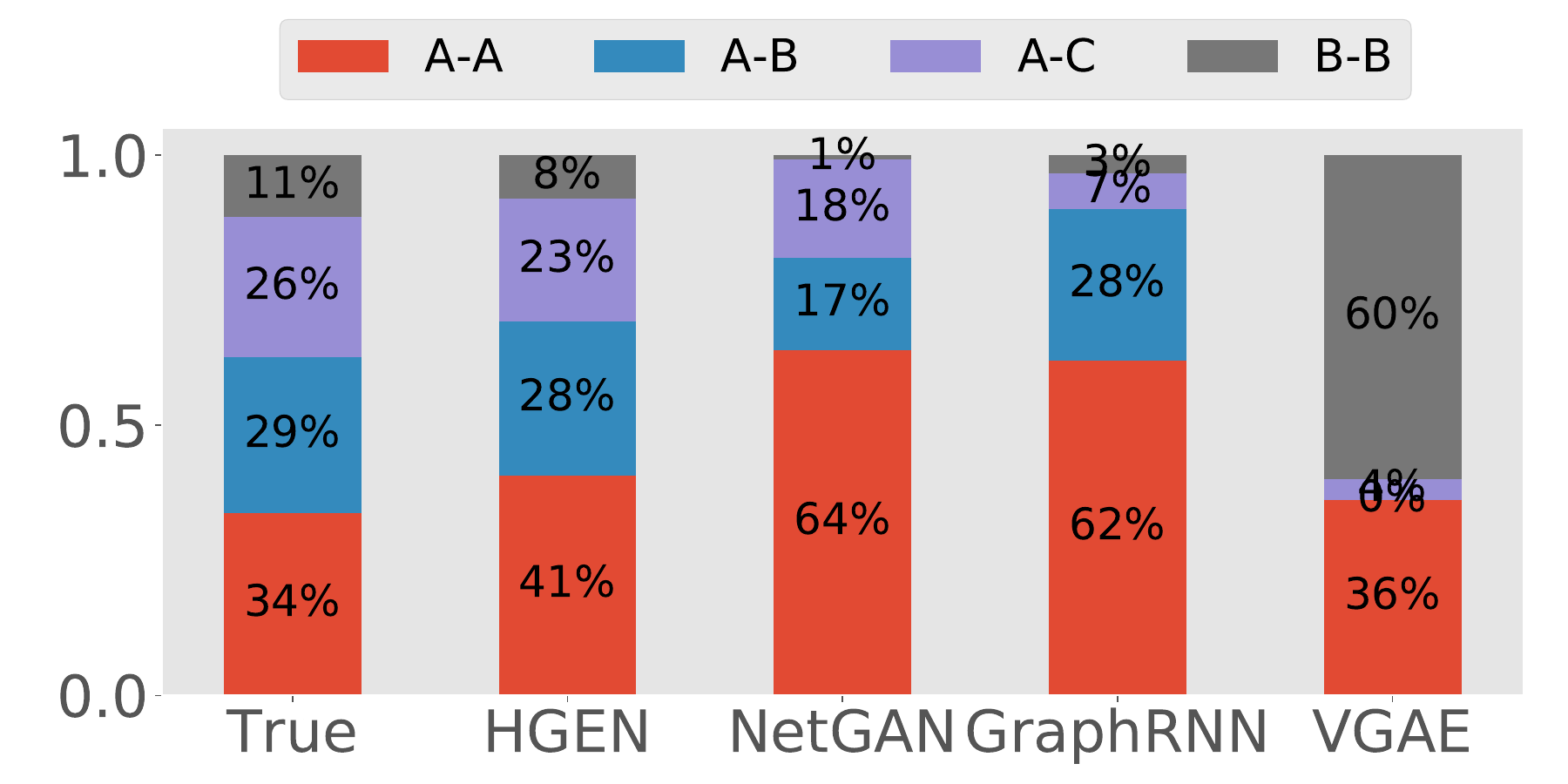}}
		\subfloat[Meta-path Patterns - Length 2]{\label{fig: frequent_path_ratio_syn_500_len_3}
			\includegraphics[width=0.24\textwidth]{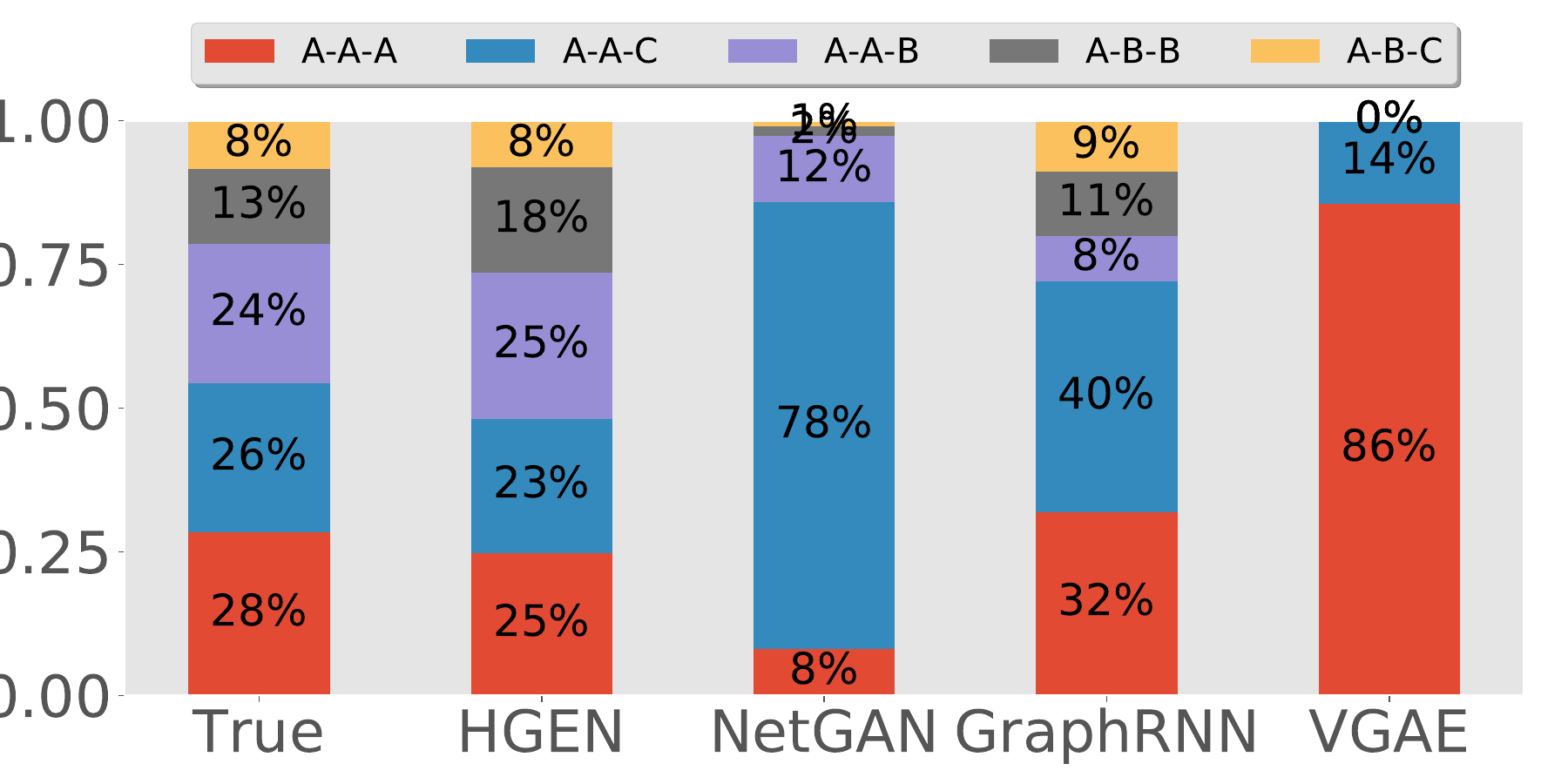}}
		\subfloat[Meta-path Patterns - Length 3]{\label{fig: frequent_path_ratio_syn_500_len_4}
			\includegraphics[width=0.24\textwidth]{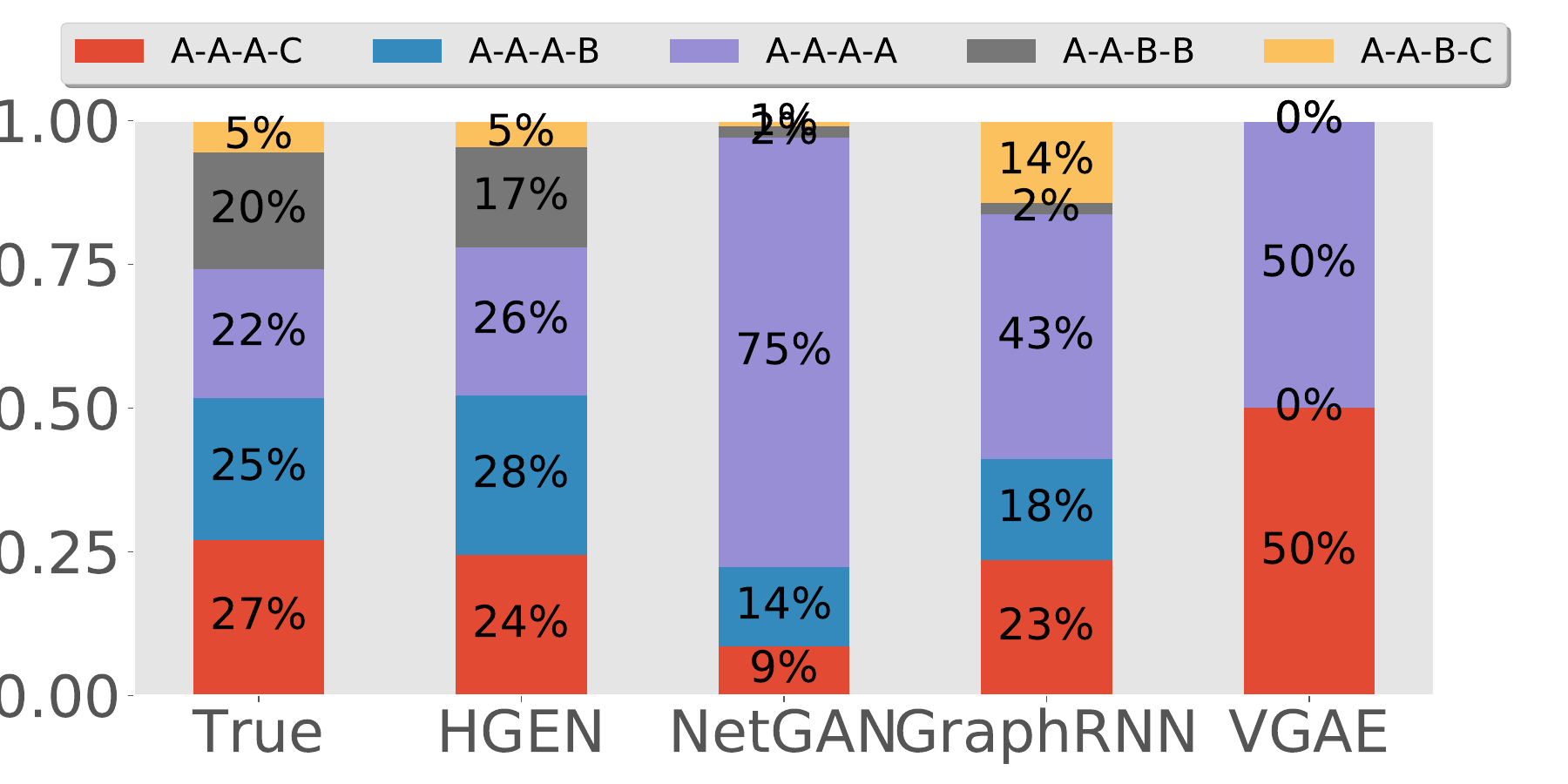}}
		\vspace{-3mm}
		\subfloat[Meta-path Length Ratio]{\label{fig: path_ratio_pubmed}
			\includegraphics[width=0.24\textwidth]{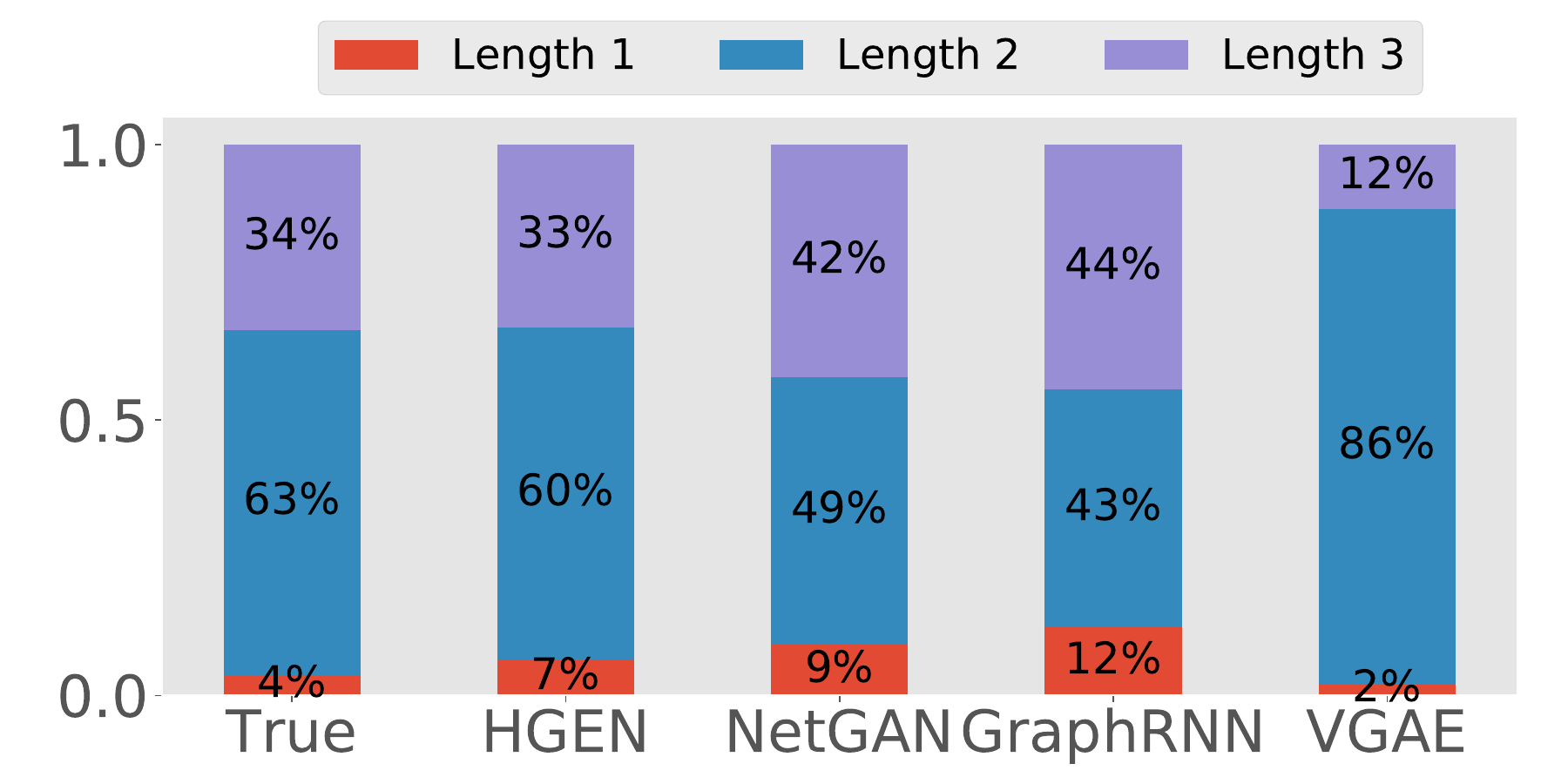}}
		\subfloat[Meta-path Patterns - Length 1]{\label{fig: frequent_path_ratio_pubmed_len_2}
			\includegraphics[width=0.24\textwidth]{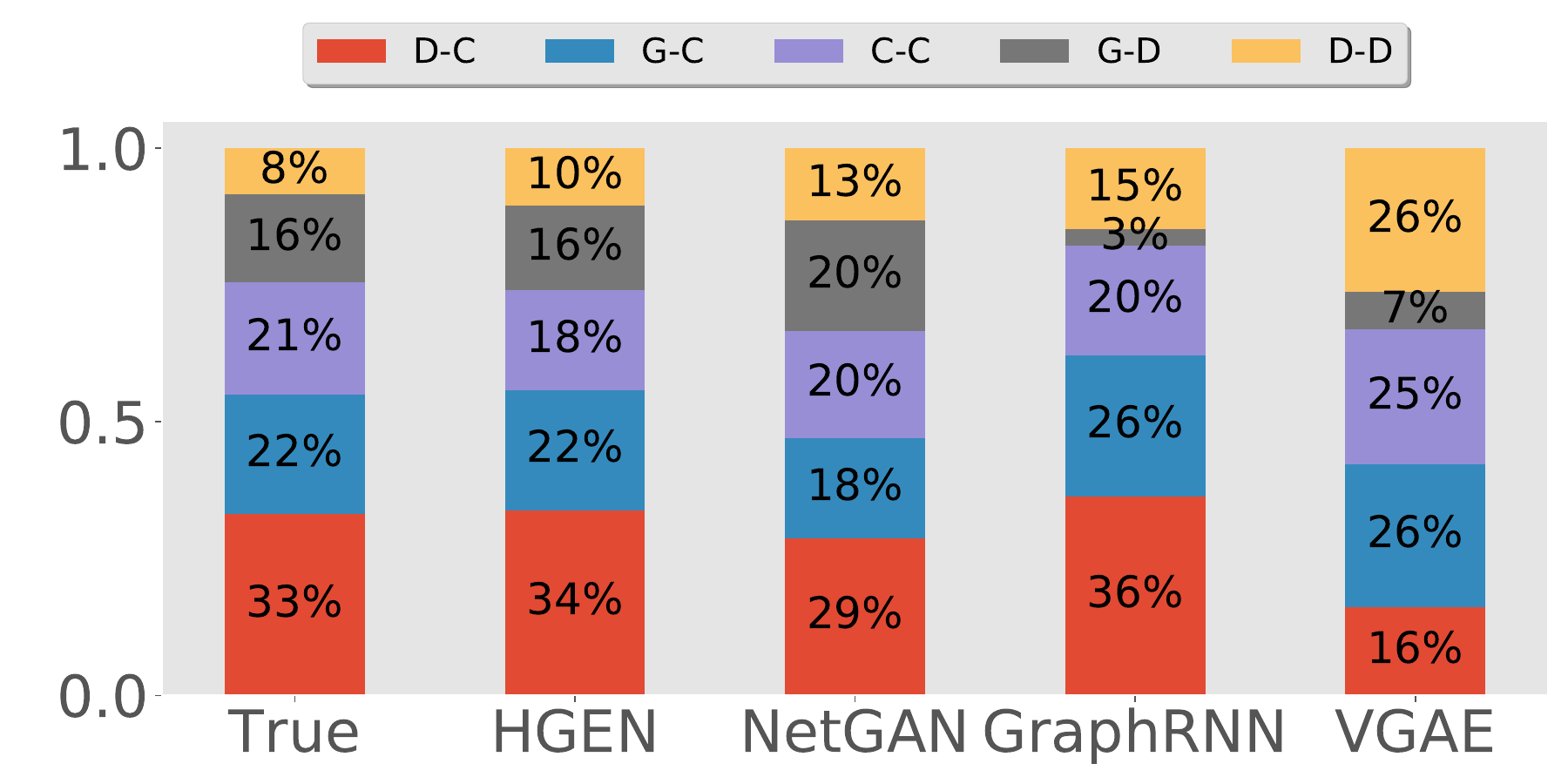}}
		\subfloat[Meta-path Patterns - Length 2]{\label{fig: frequent_path_ratio_pubmed_len_3}
			\includegraphics[width=0.24\textwidth]{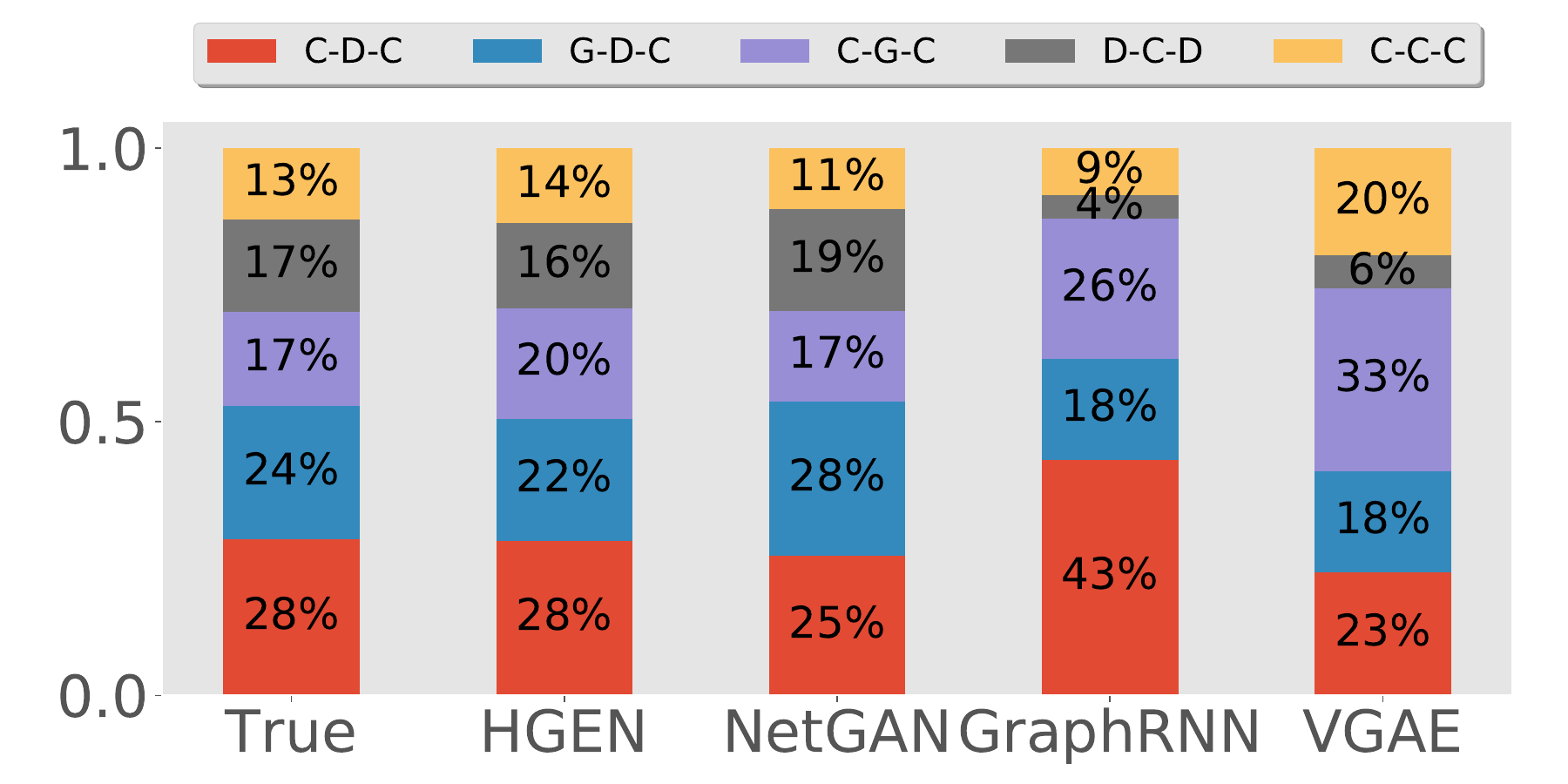}}
		\subfloat[Meta-path Patterns - Length 3]{\label{fig: frequent_path_ratio_pubmed_len_4}
			\includegraphics[width=0.24\textwidth]{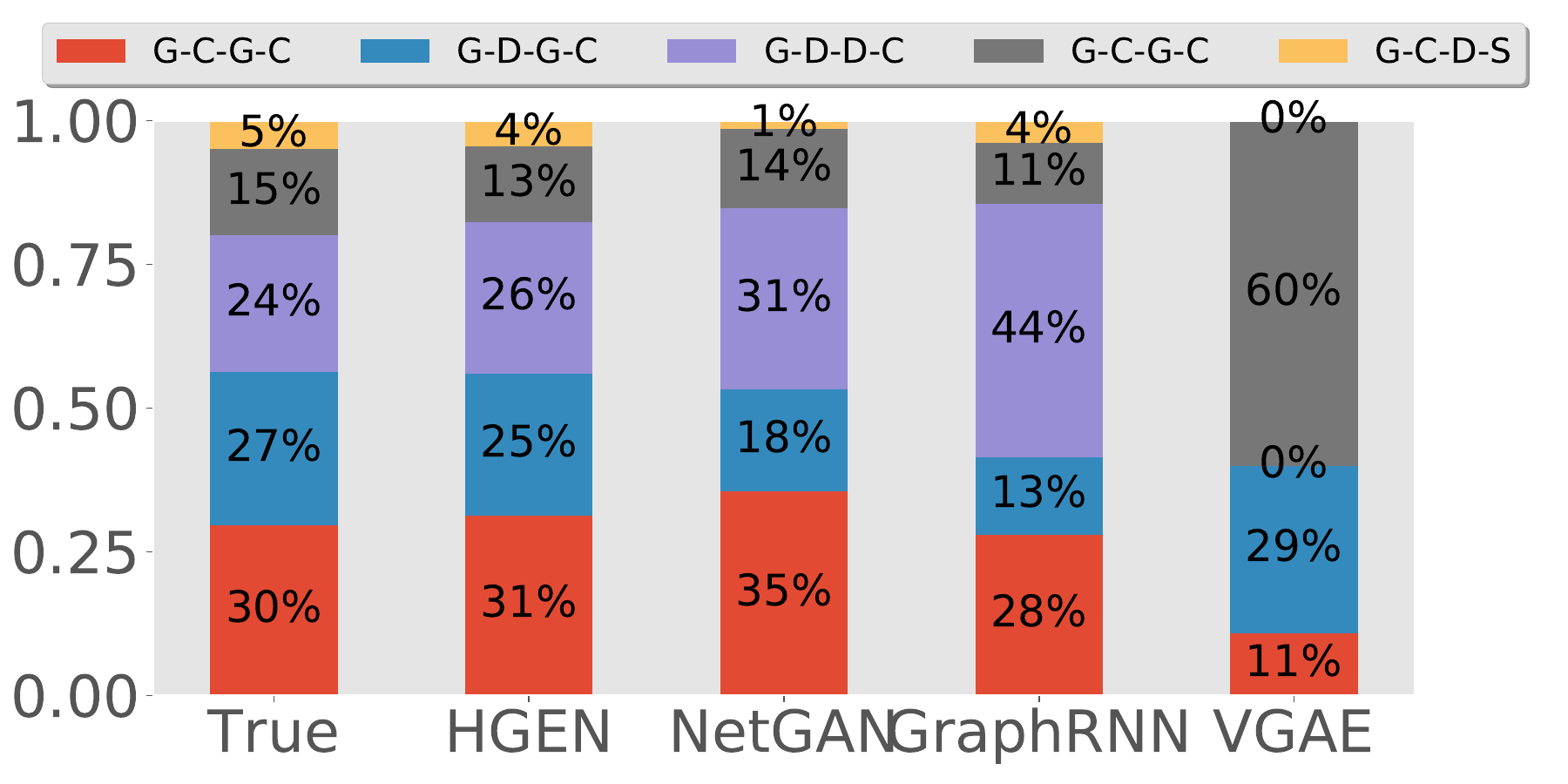}}
		\caption{The meta-path distribution comparison. \ref{fig: path_ratio_syn_500} - \ref{fig: frequent_path_ratio_syn_500_len_4} and \ref{fig: path_ratio_pubmed} - \ref{fig: frequent_path_ratio_pubmed_len_4} are the generated meta-path length distribution along with frequent meta-path patterns distribution with length 1, 2, 3 for Syn\_500 dataset and PubMed dataset, respectively.}
		\label{fig: meta-path-preservation}
		\vspace{-5mm}
	\end{figure*}
	
	\subsection{Quantitative Analysis}
    \textbf{Preservation of Graph Statistical Properties.}
     We evaluate the performance of HGEN against all the baselines on the standard graph statistics, and the results are shown in Table \ref{tab: statistical_graph_metrics}. Overall, HGEN achieves competitive performance constantly with very few exceptions on all metrics over both synthetic and real-world datasets. We report several observations from the table: 1) \textit{Node-level similarity}: HGEN is the dominant performer in most node-level metrics. Although there are no significant differences in both Assortativity and Power-law Coef. among all the algorithms, HGEN rank top with very few exceptions in the node degree distribution distance with at least $40\%$ improvement, which indicates that HGEN can effectively capture the degree distribution of all types of nodes through jointly learning both meta-path and random walk distribution. 2) \textit{Graph level similarity}: HGEN still exceeds other baselines by effectively preserving the community distribution. Specifically, for all the datasets with rich local community information (e.g., PubMed and synthetic datasets), HGEN can utilize the heterogeneous node embedding for preserving the higher-order structural information in the generated heterogeneous walks, which leads to better performance in metrics like LCC, TC, and Clustering Coef.. However, in heterogeneous graphs with rare high-order structures, the performance of HGEN is comparatively less impressive. 3) As shown in Table \ref{tab: statistical_graph_metrics}, the random-walk based method HGEN and NetGAN can generally achieve stable performance than one-shot based (e.g., VGAE and GraphVAE) and sequential-based (GraphRNN) generative models across all datasets. The reason is that random walk based methods learn the overall graph distribution by learning the distribution of its discrete random walks, which is not sensitive to various graph characteristics.  4) Table \ref{tab: statistical_graph_metrics} also shows that VGAE cannot produce realistic graphs even though it achieves the best performance in some metrics, which is expected since the primary purpose of VGAE is learning node embeddings but not generating entire graphs. In addition, as the size of the graph increases, GraphRNN also fails to generate realistic graphs because of the weak scalability of auto-regressive models. 
    
    \begin{table}[t]
\centering
\resizebox{0.47\textwidth}{!}{%
\begin{tabular}{@{}c|cccc|c@{}}
\toprule
                  & HGEN-S  & HGEN-E & HGEN-A & HGEN   & Real   \\ \midrule
LCC               & 1563.76 & 824.14 & 819.32 & $\mathbf{825.6}$  & 948    \\
TC                & 1453.23 & 784.34 & 863.53 & $\mathbf{1569.3}$ & 2114   \\
Clustering Coef.  & 0.026   & 0.015  & 0.016  & $\mathbf{0.034}$  & 0.068  \\
Power Law Coef.   & 1.649    & $\mathbf{1.652}$   & 1.621   & 1.634  & 1.75   \\
Assortativity     & -0.09   & -0.132 & -0.131 & $\mathbf{-0.143}$ & -0.208 \\
Node Degree Dist. & $\mathbf{0.0354}$   & 0.0388   & 0.0515  & 0.0392 & N/A    \\ \bottomrule
\end{tabular}%
}
\caption{Ablation Study in PubMed Dataset}
\label{tbb: ablation}
\vspace{-8mm}
\end{table}

\textbf{Graph Novelty and Uniqueness.} The results of graph novelty and uniqueness are reported at the right two columns in Table \ref{tab: statistical_graph_metrics}. Specifically, HGEN achieves a generally lower EO rate across all datasets, indicating that HGEN does not purely memorize the seen heterogeneous walks in the training data. In contrast, GraphRNN has a higher EO rate, indicating GraphRNN regenerates graphs it saw during training. In addition, VGAE achieves the lowest EO rate since it fails to generate realistic heterogeneous graphs. For Uniqueness, HGEN also exceeds other one-shot and sequential based algorithms by an evident margin, which demonstrates the diversity of the generated graphs. 

    \begin{figure*}[!t]
		\subfloat[Real]{\label{fig: syn_200_real}
			\includegraphics[width=0.16\textwidth]{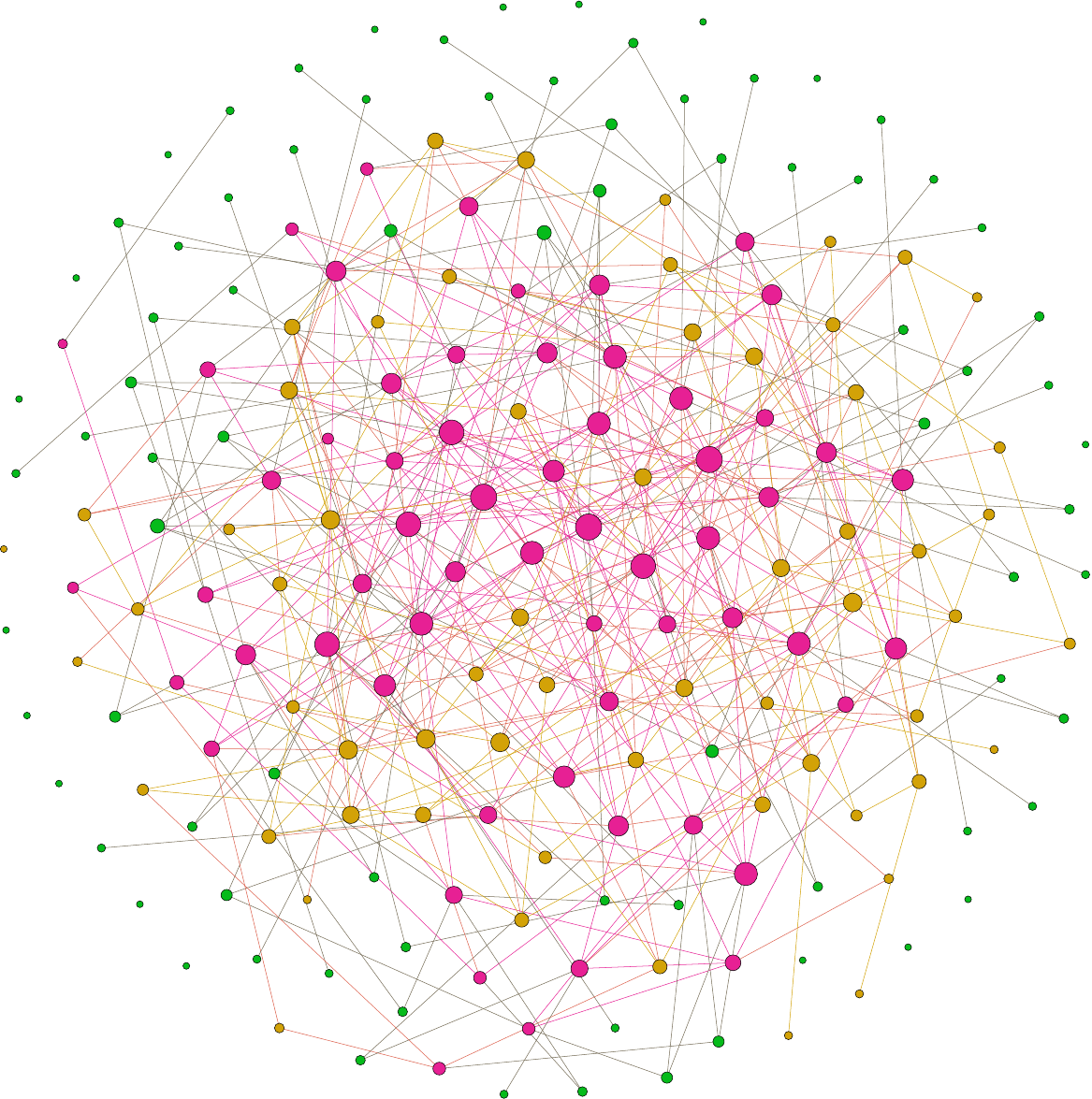}}
		\subfloat[NetGAN]{\label{fig: syn_200_netgan}
			\includegraphics[width=0.16\textwidth]{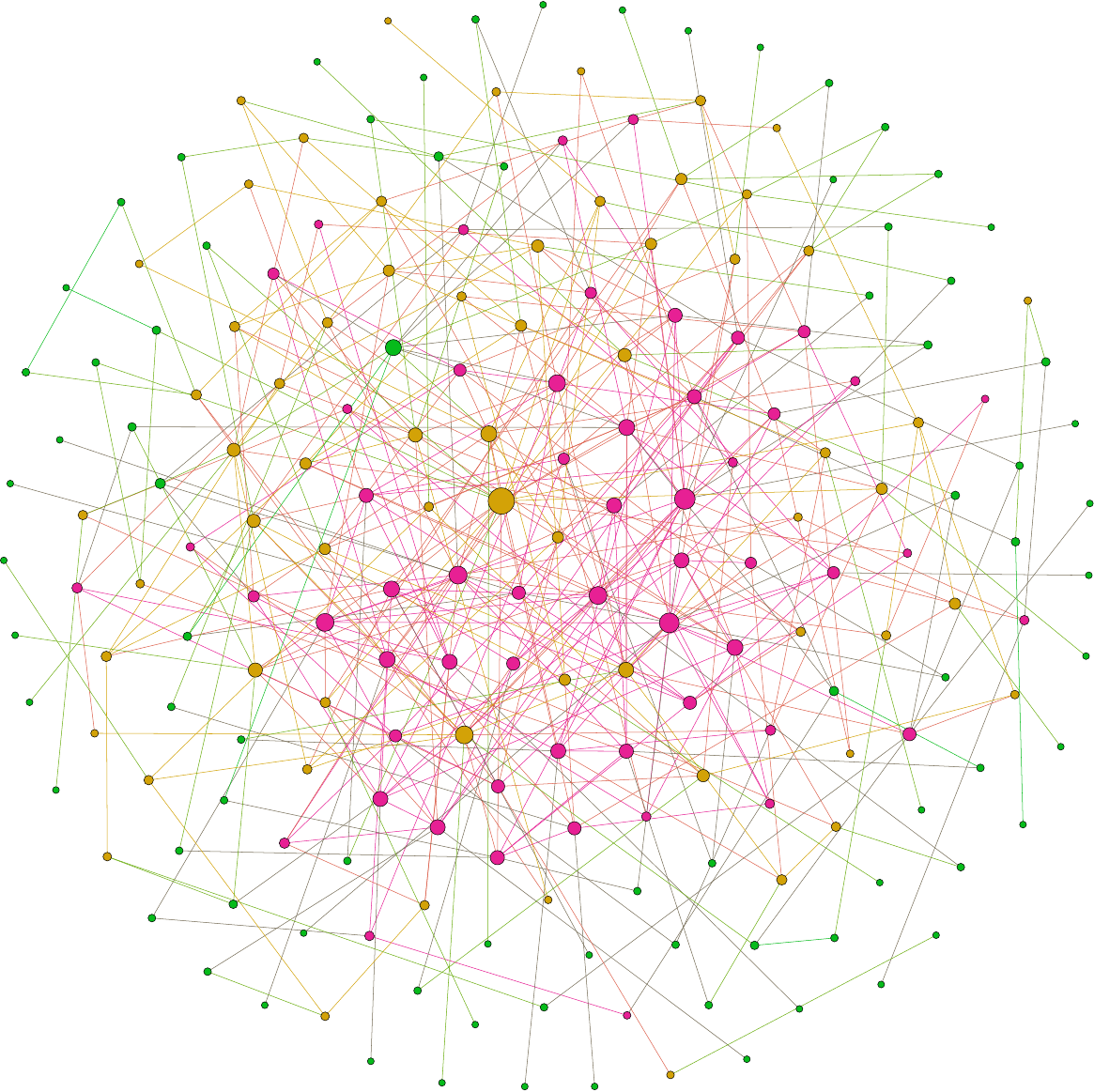}}
		\subfloat[GraphRNN]{\label{fig: syn_200_graphrnn}
			\includegraphics[width=0.16\textwidth]{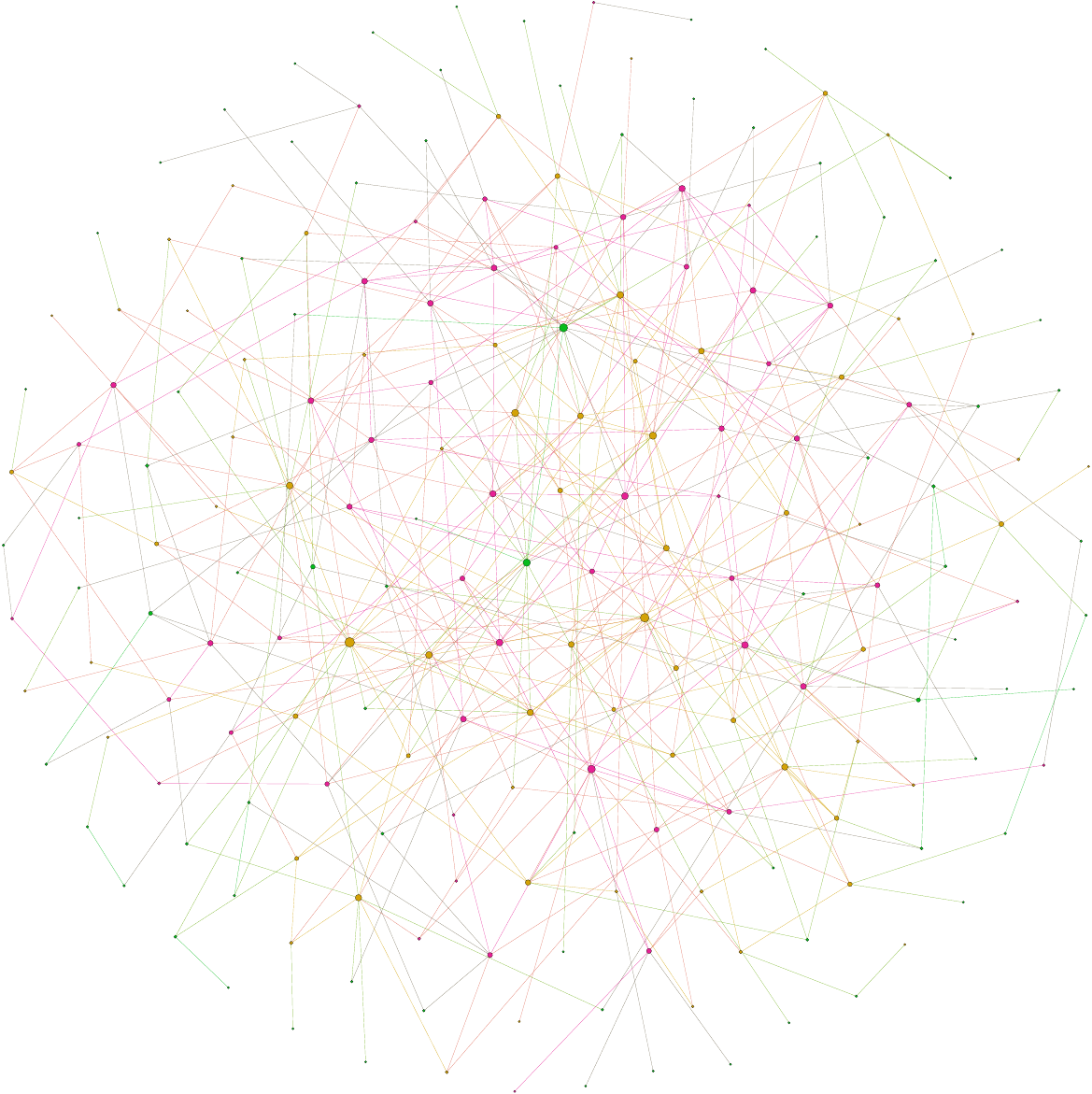}}
		\subfloat[GraphVAE]{\label{fig: syn_200_graphvae}
			\includegraphics[width=0.16\textwidth]{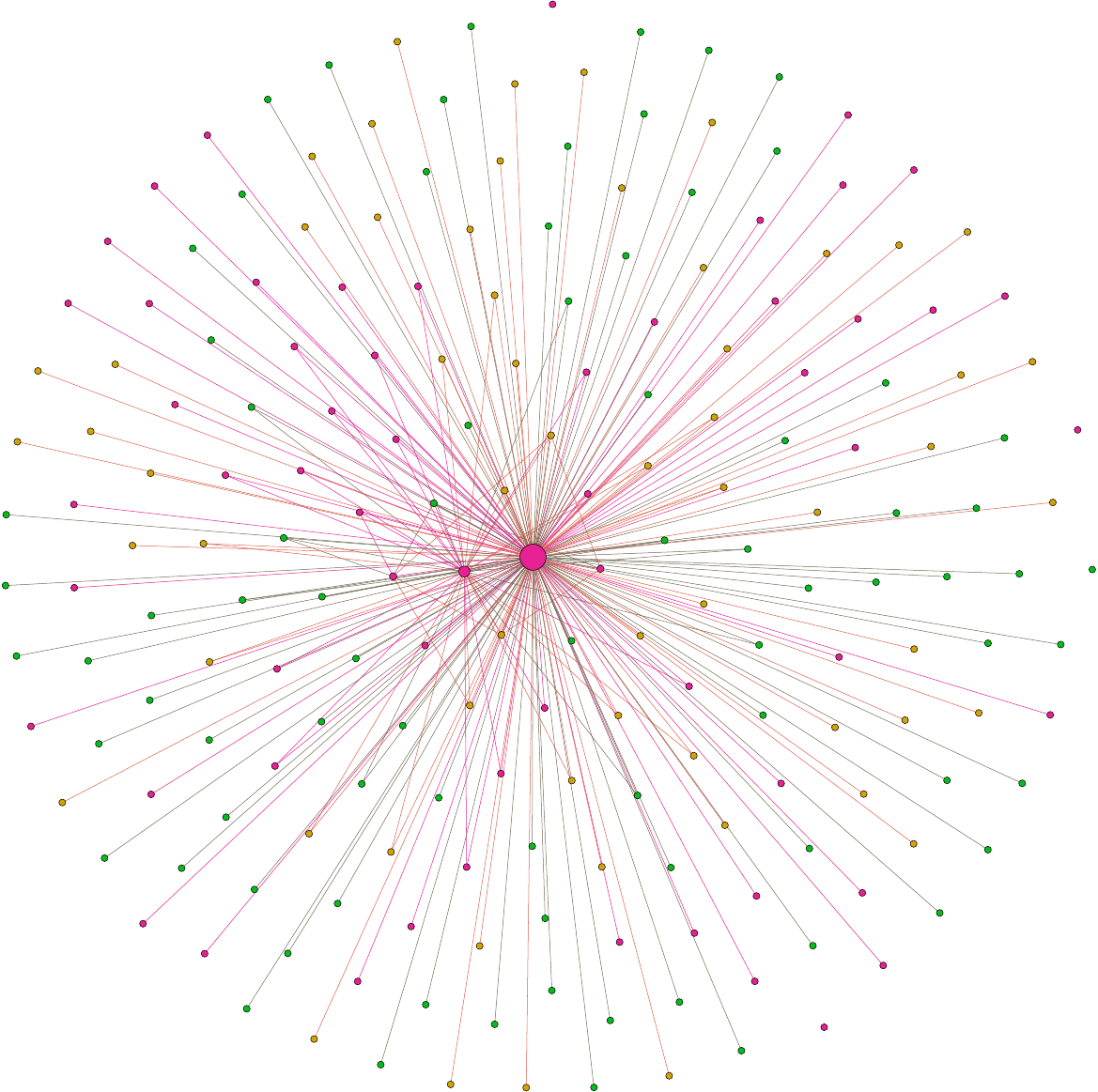}}
		\subfloat[VGAE]{\label{fig: syn_200_vgae}
			\includegraphics[width=0.16\textwidth]{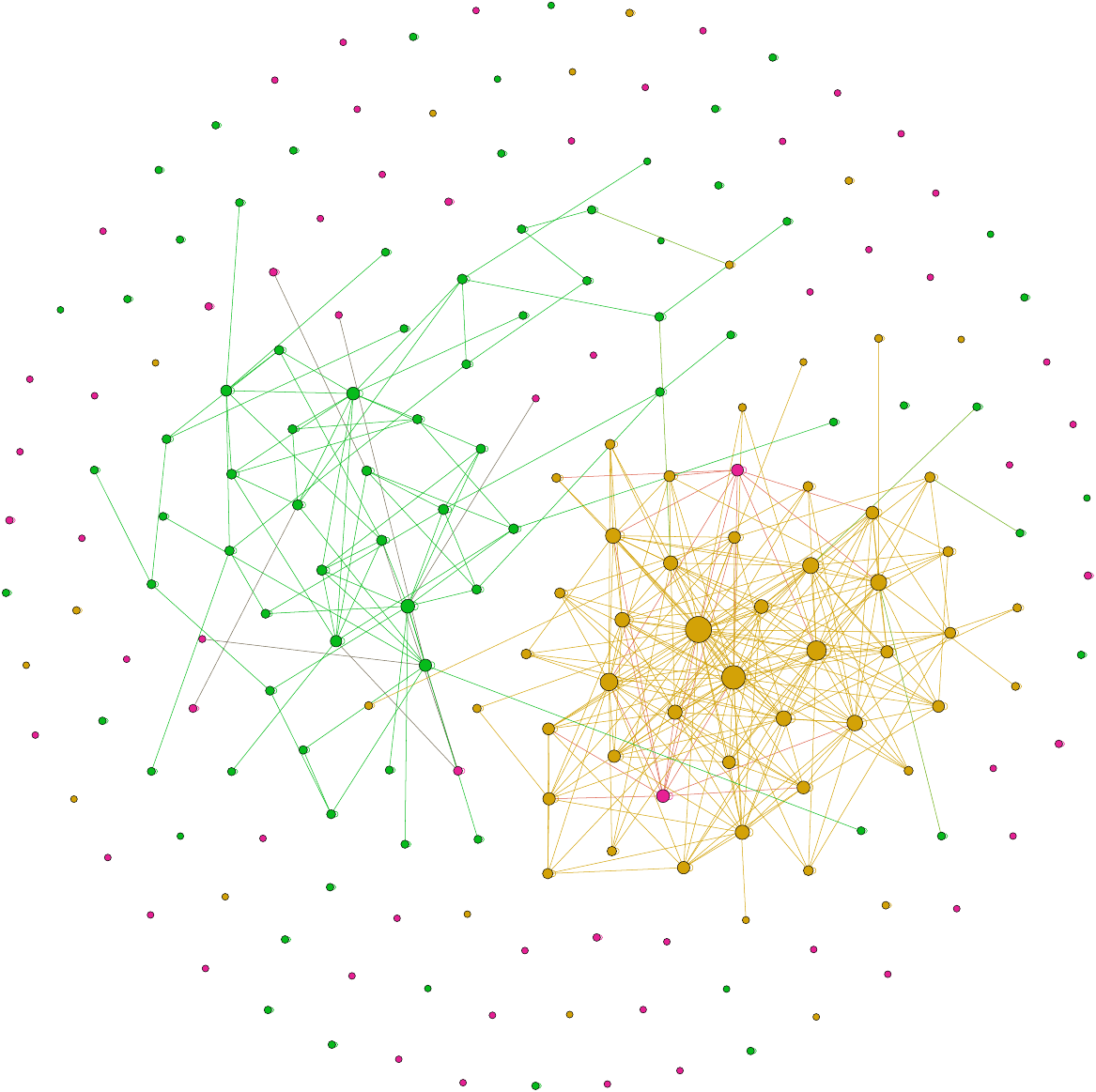}}
		\subfloat[HGEN]{\label{fig: syn_200_hgen}
			\includegraphics[width=0.16\textwidth]{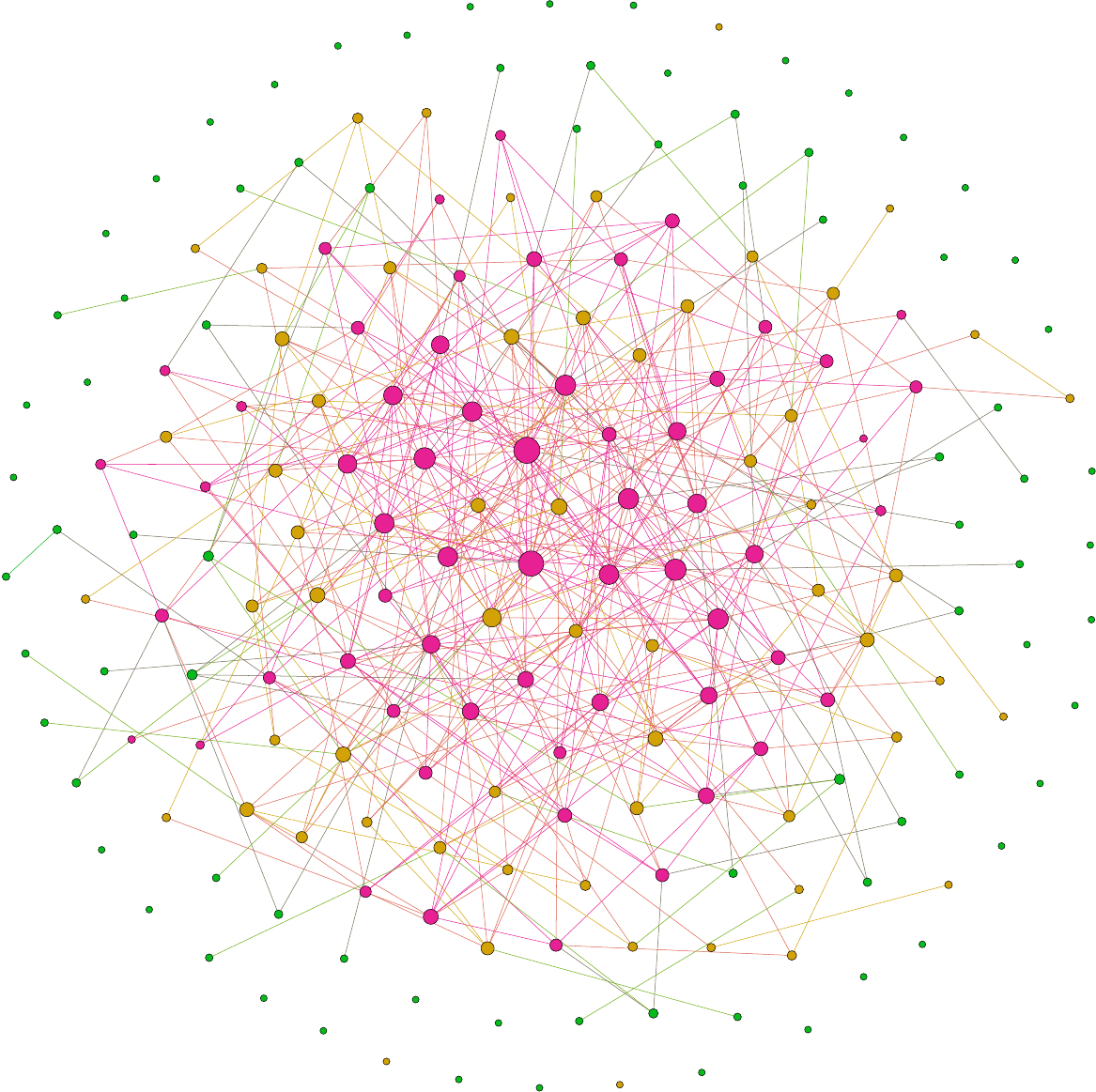}}
    \vspace{-3mm}
		\subfloat[Real]{\label{fig: pubmed_real}
			\includegraphics[width=0.19\textwidth]{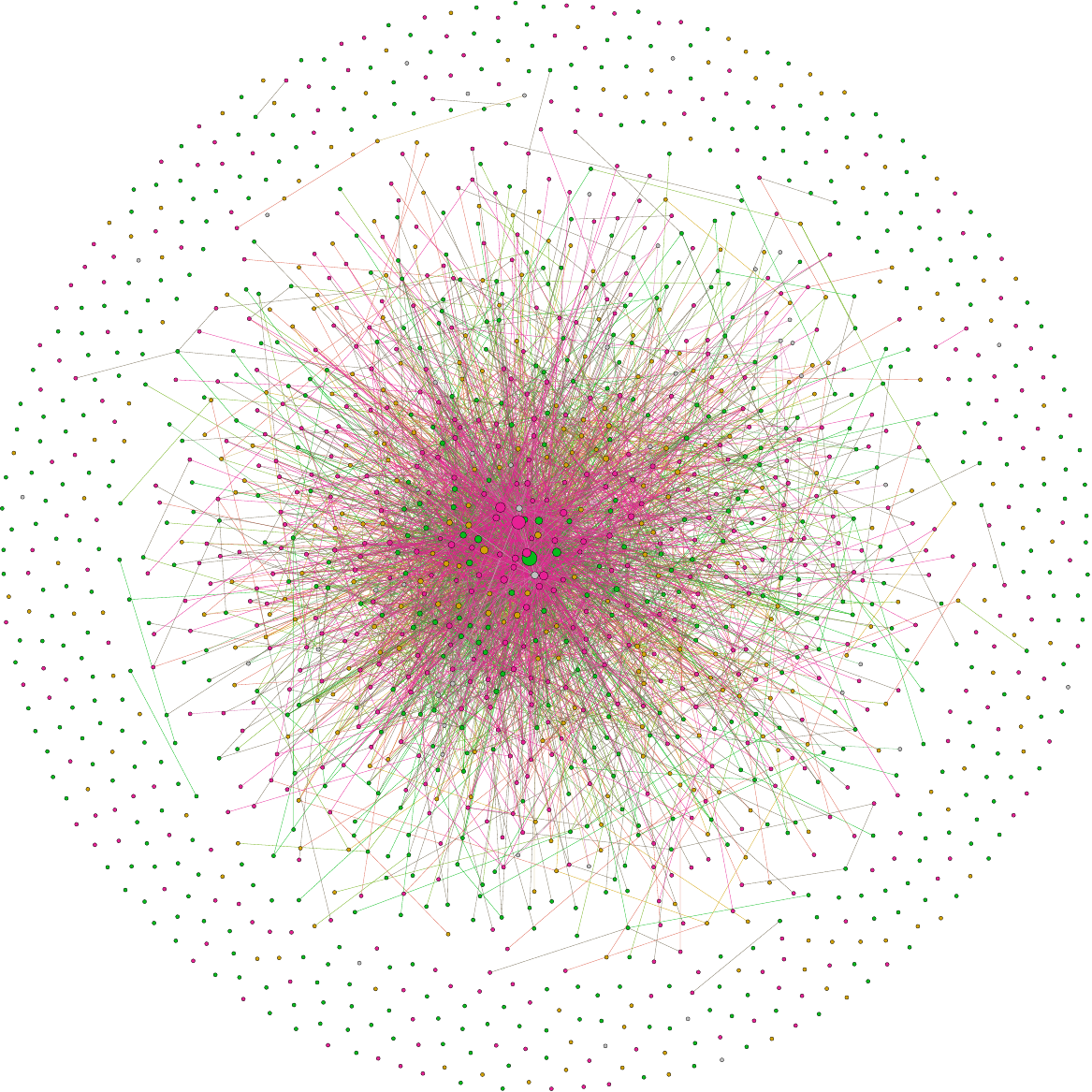}}
		\subfloat[NetGAN]{\label{fig: pubmed_netgan}
			\includegraphics[width=0.19\textwidth]{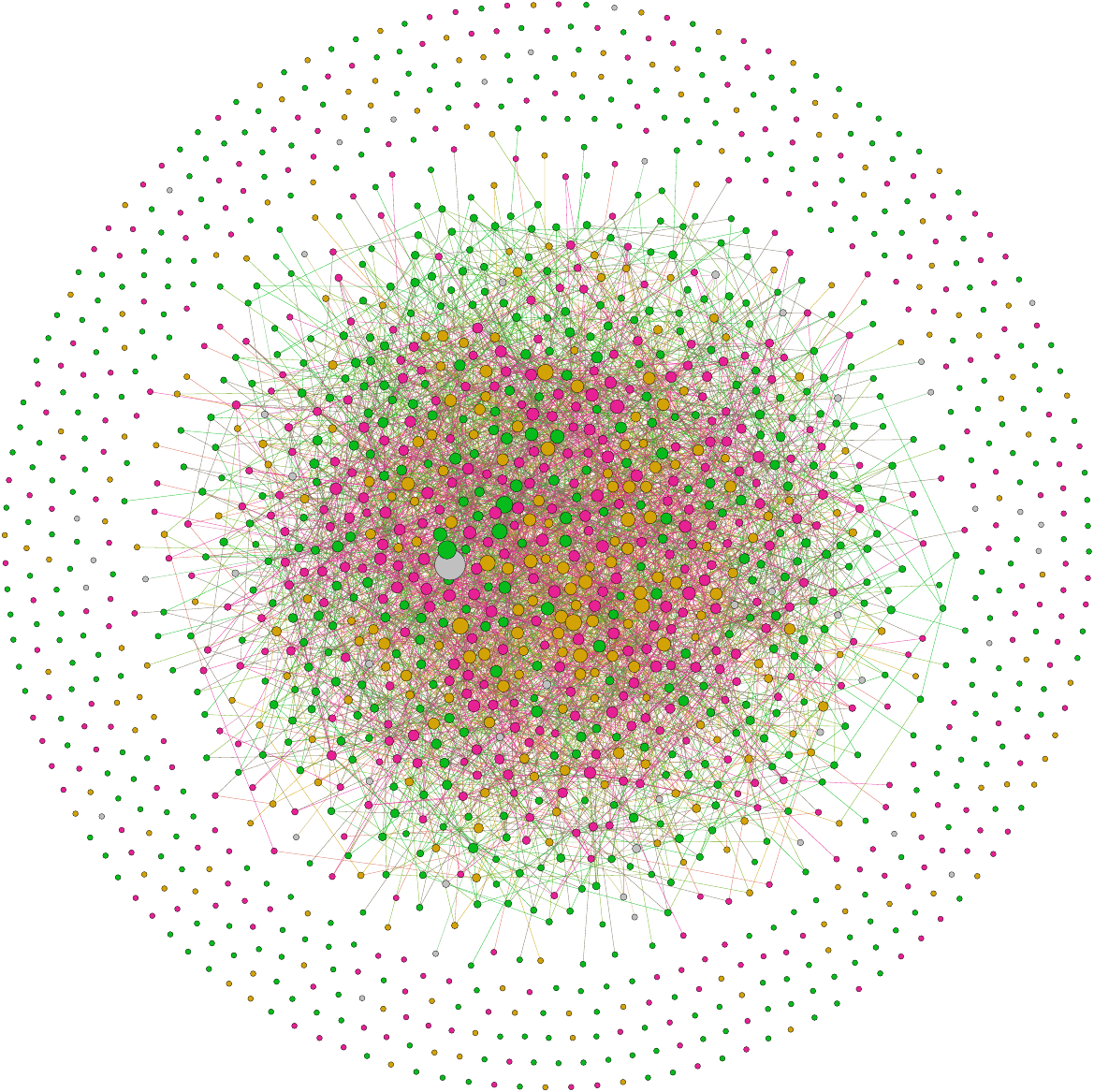}}
		\subfloat[GraphRNN]{\label{fig: pubmed_graphrnn}
			\includegraphics[width=0.19\textwidth]{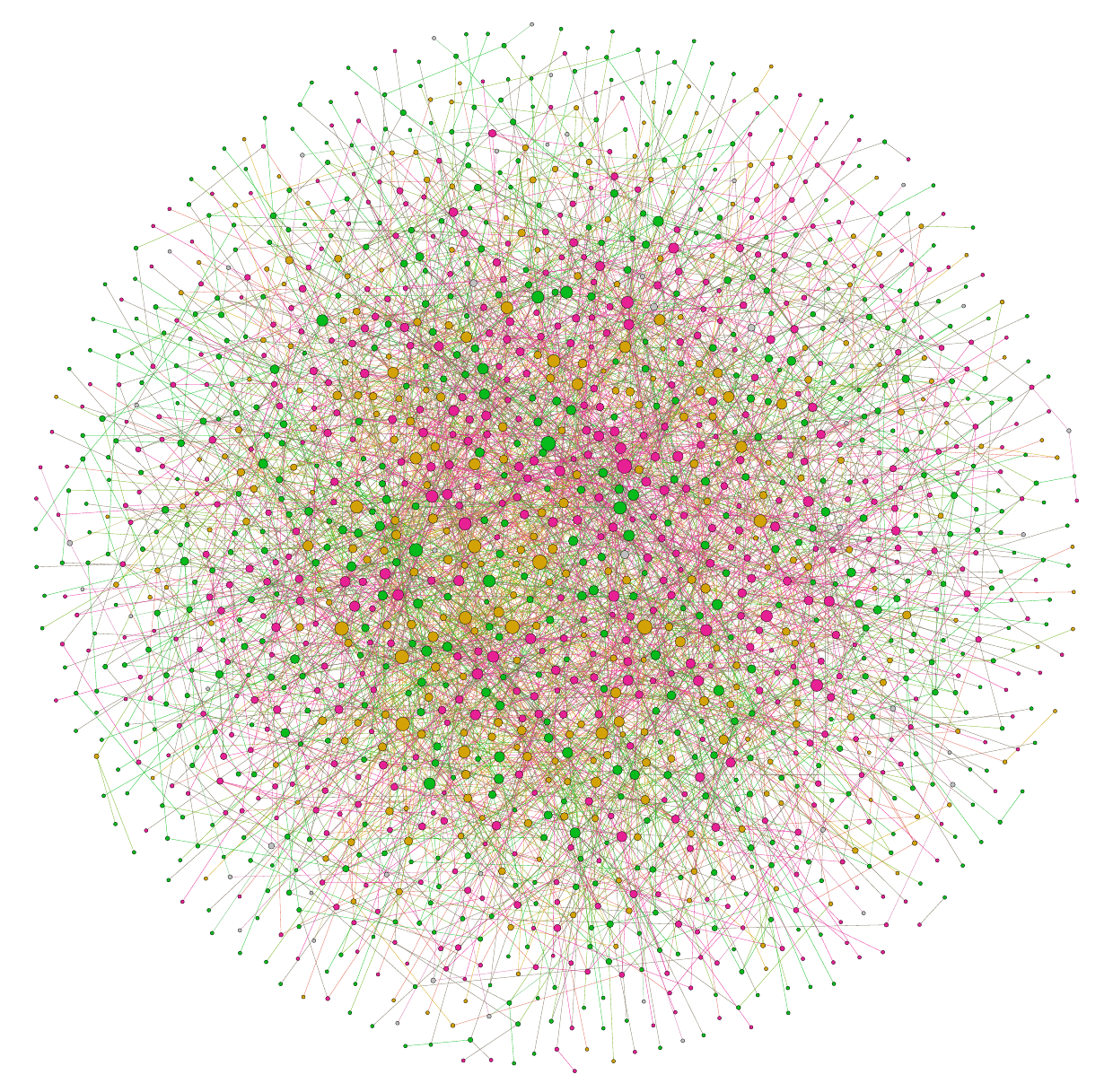}}
		\subfloat[VGAE]{\label{fig: pubmed_vgae}
			\includegraphics[width=0.19\textwidth]{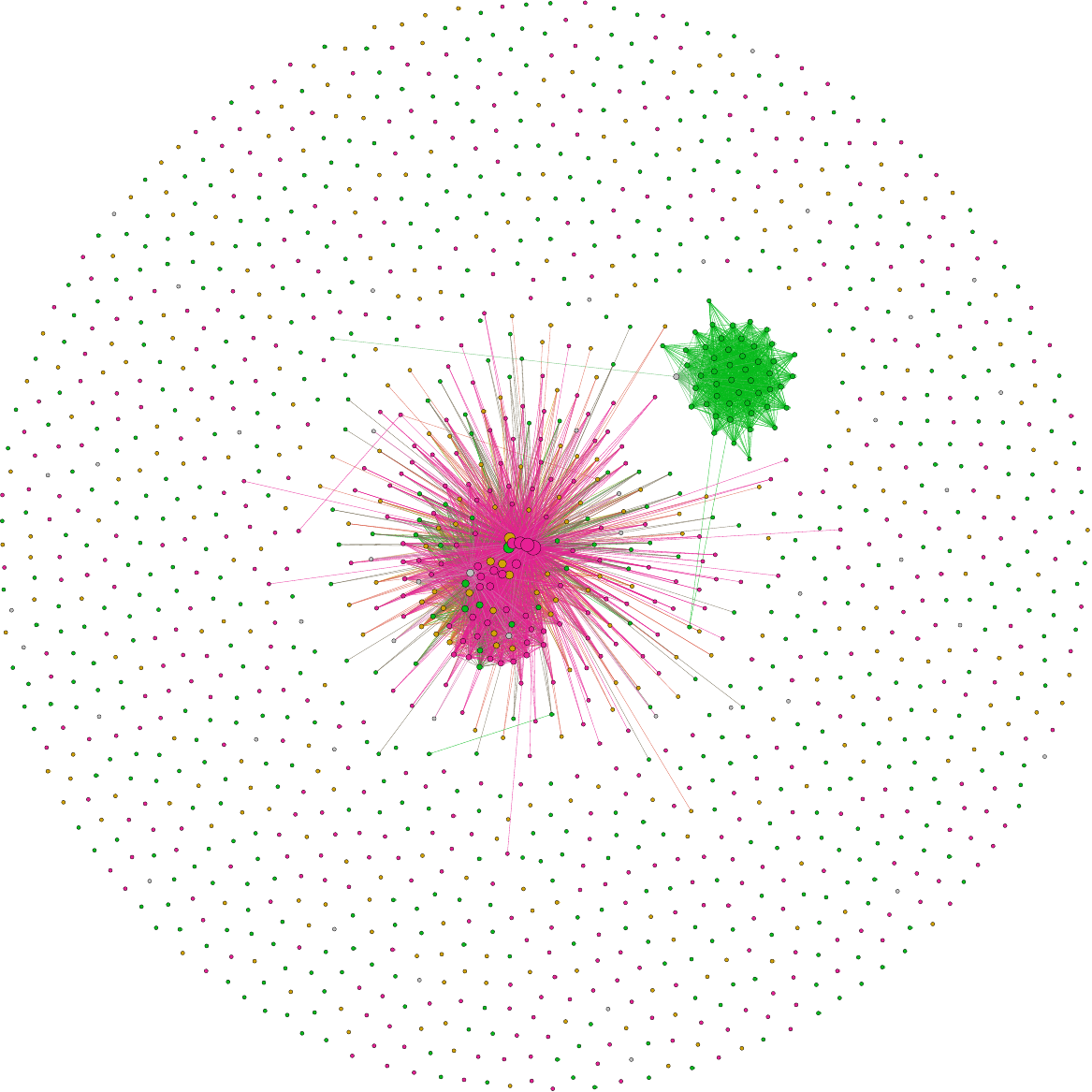}}
		\subfloat[HGEN]{\label{fig: pubmed_hgen}
			\includegraphics[width=0.19\textwidth]{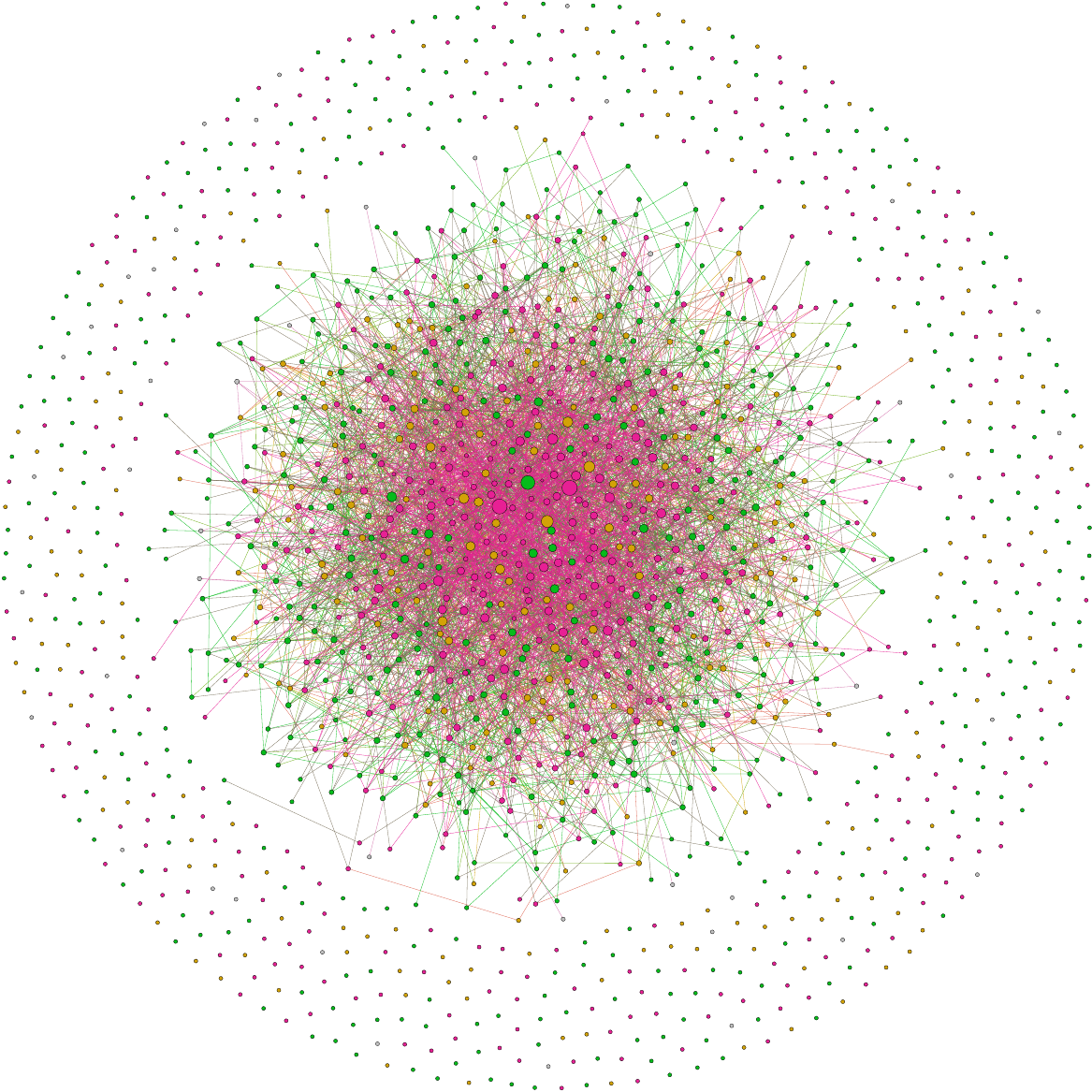}}
		\caption{\ref{fig: syn_200_real} - \ref{fig: syn_200_hgen} are the generated graph of the Syn\_200 dataset, and \ref{fig: pubmed_real} - \ref{fig: pubmed_hgen} are the generated graphs of the PubMed dataset.}
		\label{fig: syn_200}
		\vspace{-5mm}
	\end{figure*}

    \textbf{Preservation of Graph Semantic Properties}
    To further demonstrate the performance of HGEN, we evaluate the performance of meta-path distribution preservation with other baselines. Specifically, we measure the meta-path distribution from two aspects: 1) the overall meta-path length ratio preservation in generated graphs and 2) frequent meta-path patterns under each length. The results of Syn\_500 and PubMed datasets are illustrated in Fig. \ref{fig: meta-path-preservation}. In general, all the methods can approximately maintain the meta-path length ratio except for VGAE. However, HGEN can constantly achieve a better performance as shown in Fig. \ref{fig: path_ratio_syn_500} and \ref{fig: path_ratio_pubmed}. 2) As shown in Fig. \ref{fig: frequent_path_ratio_syn_500_len_2} - \ref{fig: frequent_path_ratio_syn_500_len_4} and  \ref{fig: frequent_path_ratio_pubmed_len_2} - \ref{fig: frequent_path_ratio_pubmed_len_4}, HGEN can outperform other methods by at least $10\%$ in preserving the ratio of specific meta-path patterns under each length, which is expected since HGEN is able to learn and maintain the meta-path distribution from the observed graphs while others cannot. 
    \begin{figure}[!t]
    \vspace{-5mm}
		\subfloat[Synthetic Dataset Running Time]{\label{fig: running_1}
			\includegraphics[width=0.235\textwidth]{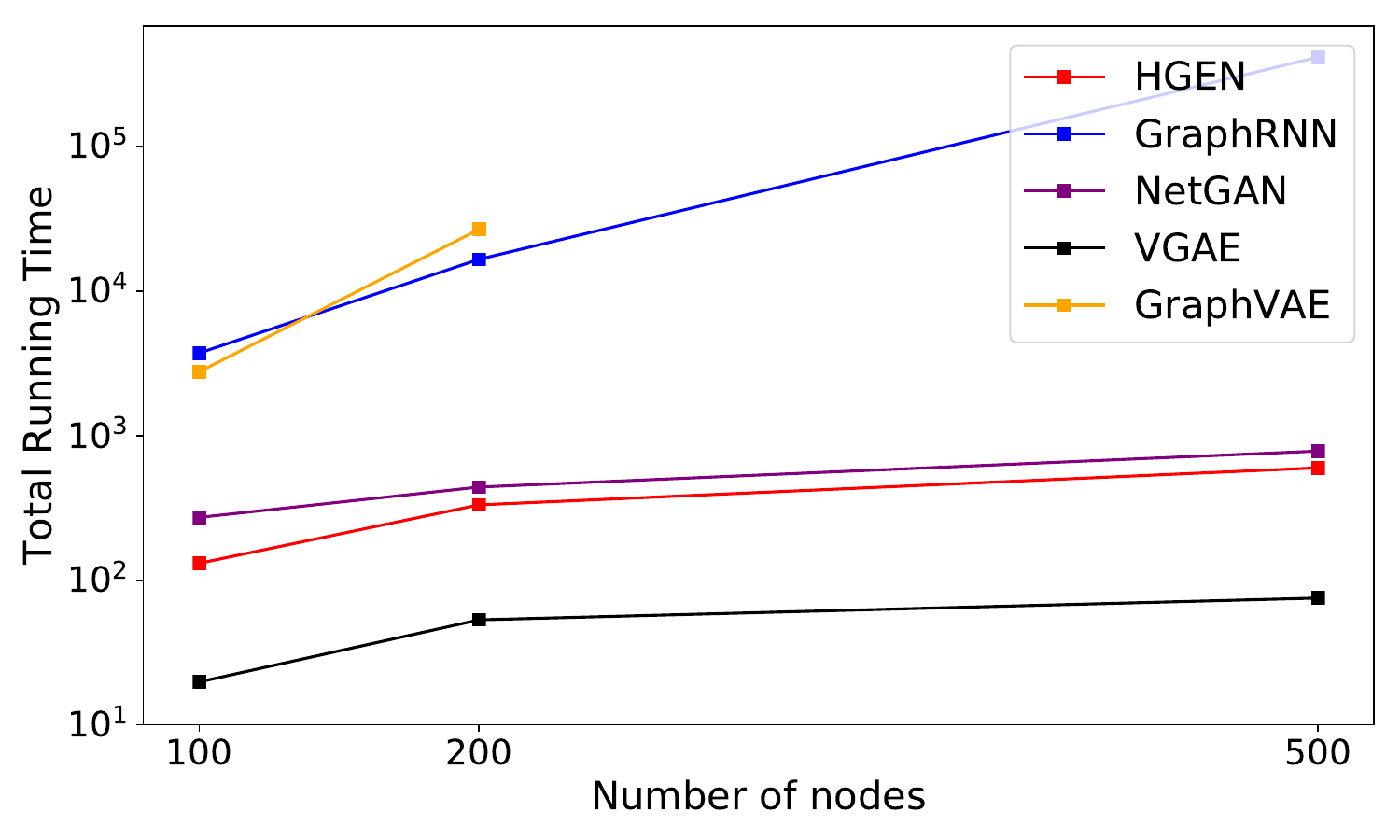}}
		\subfloat[Real Dataset Running Time]{\label{fig: running_2}
			\includegraphics[width=0.235\textwidth]{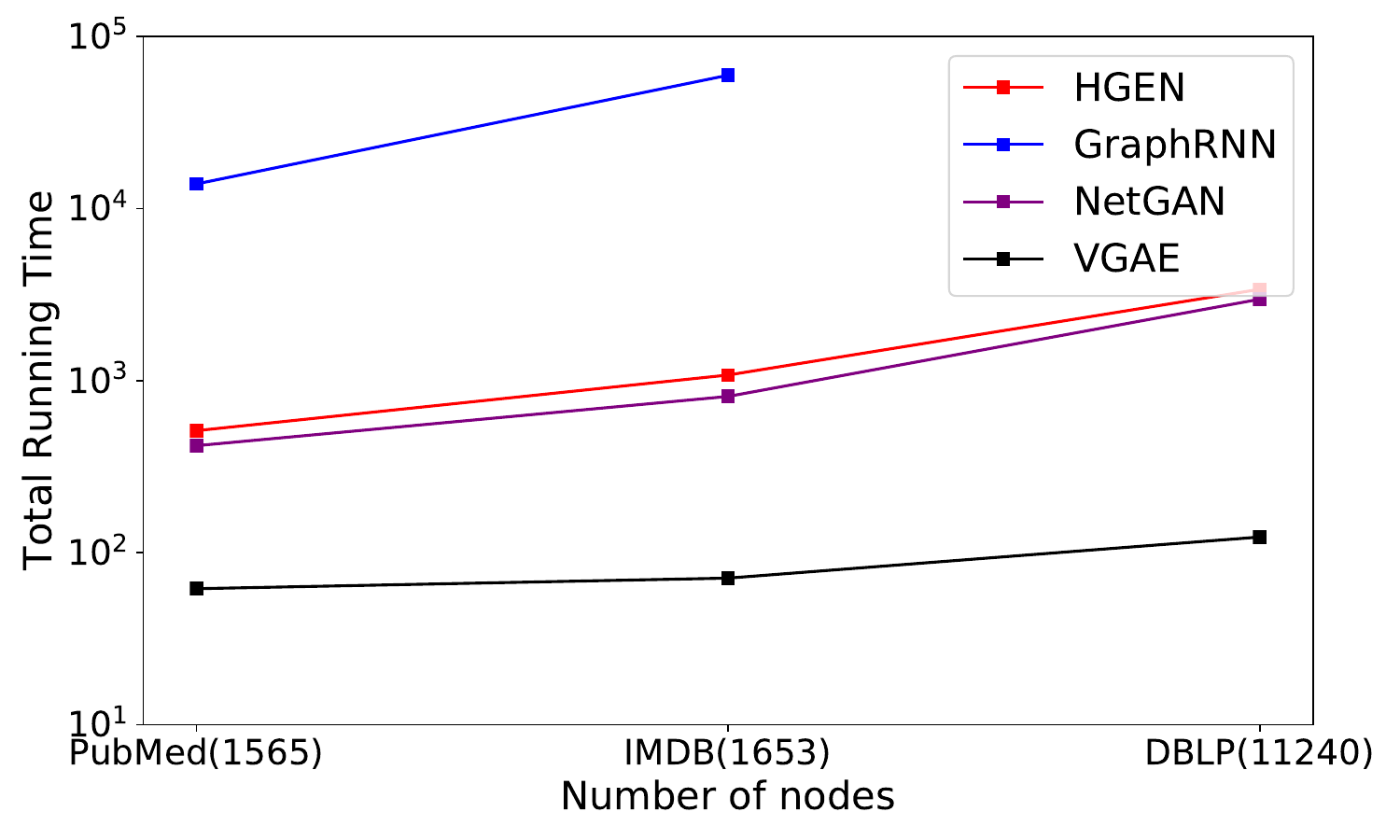}}
			\vspace{-2mm}
		\caption{Running time comparison of different models in both synthetic and real world datasets. It is clear that GraphVAE is not scalable in generating graphs with more than $200$ nodes. GraphRNN also fails in generating large graphs (with more than $10,000$ nodes). The proposed HGEN exhibits a linear running time growth in terms of the growth of graph size.}
		\label{fig: running_time}
		\vspace{-6mm}
	\end{figure}

    \subsection{Ablation Study}
    We further conduct ablation studies on the PubMed dataset to evaluate the effect of different components in HGEN, and the results are exhibited in Table \ref{tbb: ablation}. The ablative experiments are conducted based on each of the essential components in our architecture. Specifically, we select a single large heterogeneous walk length - $8$ to replace the heterogeneous walk length $1$, $2$, and $3$ in our model, and the resulting model is called HGEN-S. We also independently remove the heterogeneous node embedding to let the generator uniformly sample the next node, and the resulting model is named HGEN-E. Lastly, we replace the heterogeneous graph assembler with a probability-based graph assembler, namely HGEN-A.

    As shown in Table \ref{tbb: ablation}, all the ablative models achieve similar results in node-level metrics like Powerlaw Coef., Assortativity, which is because HGEN can well capture this node-level information through learning the heterogeneous walk distribution. Other than that, we observe: 1) HGEN-S can construct a larger sub-graph since the length of the heterogeneous walk is largely greater than HGEN, but the large subgraph doe not makes any improvements in terms of capturing the heterogeneous structural information. The reason is there are rarely long meta-path in the heterogeneous graph since longer meta-paths are highly redundant because of the shared sub-parts \cite{sun2011pathsim}. We instead choose $1$, $2$, and $3$ as our meta-path lengths to make the whole generation more flexible. 2) removing the heterogeneous node embedding would make HGEN-E hard to capture the local graph structure since HGEN relies on the encoded neighborhood information to make the node sampling be aware of the local structure. 3) as shown in the node degree distribution evaluation, replacing the heterogeneous graph assembler with a probabilistic graph assembler would cause HGEN-A hard to capture the latent heterogeneous node distribution because it uniformly samples edges from the generated walks and completely neglects the generated meta-path information. However, HGEN takes meta-paths as a basic unit to sample edges so that it can effectively preserve the overall distribution of meta-paths as proved in Theorem \ref{thm: 1}. Therefore, the node degree distribution under each type can be well preserved. 
    
    \subsection{Running Time Comparison}
    The results of our running time experiments are shown in Fig. \ref{fig: running_time}. The running times on both synthetic and real-world datasets including both training and inference time are shown with respect to the growth of number of nodes in both synthetic and real-world datasets. All running times are in $log10$ scale. As shown in both figures, random-walk-based generative models (HGEN and NetGAN) have a constant running time growth in terms of number of nodes, which is especially important when dealing with large graphs. Even though VGAE is much faster regarding running time, it is indeed a representation learning framework based on GCN and lacks of the ability of generating realistic heterogeneous graphs, and the results are also reflected in Table \ref{tab: statistical_graph_metrics}. Both GraphRNN and GraphVAE fail to compare with HGEN in model scalability because their designs require at least $O(|\mathcal{V}|^2)$ to process the transformed node sequence and adjacency matrix.
    
    \subsection{Graph Visualization}
    Since it is nearly impossible to judge whether a graph is realistic only by statistics, we visualize the generated graph to further demonstrate the performance of HGEN (Fig. \ref{fig: syn_200}). Visually, HGEN looks the most similar, while both GraphVAE and VGAE is the most dissimilar. This result is consistent with the quantitative results obtained in Table \ref{tab: statistical_graph_metrics}. For one-shot based generative models, GraphVAE and VGAE, they fail to capture the structural similarity of the observed heterogeneous graph. For the sequential-based and random walk based graph generative methods, GraphRNN and NetGAN can successfully mimic the structure similarity but fail to preserve the global heterogeneous graph properties (e.g., overall meta-path ratio).

    \section{Conclusion} \label{sec: con}
    In this paper, we propose a novel framework - HGEN for heterogeneous graph generation, which can jointly capture the semantic, structural, and global distributions of heterogeneous graphs. Our framework consists of a novel heterogeneous walk generator that can hierarchically generate meta-path instances (namely heterogeneous walk) and a heterogeneous graph assembler that can construct new graphs by sampling from the generated heterogeneous walks in a stratified manner. Extensive experiments on synthetic and real-world datasets demonstrate the advantages of HGEN over existing deep generative models in terms of preserving both graph statistical and heterogeneous specified properties.
    
\bibliographystyle{IEEEtran}
\bibliography{cling}

\end{document}